\renewcommand{\tnotetext}[1]{}
\journal{}
\newtheorem{theorem}{Theorem}
\newdefinition{definition}{Definition}
\newproof{proof}{Proof}
\newlist{myEnumerate}{enumerate}{6}
\title{Blockchain-based Cloud Data Deduplication Scheme with Fair Incentives}
\author[rvt,focal]{Mallikarjun Reddy Dorsala\corref{cor1}}\ead{arjun753016@gmail.com}
\author[rvt]{V. N. Sastry}
\author[focal]{Sudhakar Chapram}
\address[rvt]{Center for Mobile Banking, Institute for Development and Research in Banking Technology (IDRBT), Hyderabad, India}
\address[focal]{Department of Computer Science and Engineering, National Institute of Technology, Warangal, India}
\begin{document}
\begin{abstract}
With the rapid development of cloud computing, vast amounts of duplicated data are being uploaded to the cloud, wasting storage resources. Deduplication (dedup) is an efficient solution to save storage costs of cloud storage providers (CSPs) by storing only one copy of the uploaded data. However, cloud users do not benefit directly from dedup and may be reluctant to dedup their data. To motivate the cloud users towards dedup, CSPs offer incentives on storage fees. The problems with the existing dedup schemes are that they do not consider: (1) correctness - the incentive offered to a cloud user should be computed correctly without any prejudice. (2) fairness - the cloud user receives the file link and access rights of the uploaded data if and only if the CSP receives the storage fee. Meeting these requirements without a trusted party is non-trivial, and most of the existing dedup schemes do not apply. Another drawback is that most of the existing schemes emphasize incentives to cloud users but failed to provide a reliable incentive mechanism.

As public Blockchain networks emulate the properties of trusted parties, in this paper, we propose a new Blockchain-based dedup scheme to meet the above requirements. In our scheme, a smart contract computes the incentives on storage fee, and the fairness rules are encoded into the smart contract for facilitating fair payments between the CSPs and cloud users. We prove the correctness and fairness of the proposed scheme. We also design a new incentive mechanism and show that the scheme is individually rational and incentive compatible. Furthermore, we conduct experiments by implementing the designed smart contract on Ethereum local Blockchain network and list the transactional and financial costs of interacting with the designed smart contract. 
\begin{keyword}
    Deduplication, Cloud storage systems, Blockchain, Fair payments, Incentive system, Smart contracts
\end{keyword}
\end{abstract}

\maketitle
\section{Introduction}
With the advent of cloud computing, outsourcing data to remote cloud storage servers  becomes a common practice \cite{6522585}. However, most of the data being uploaded is redundant \cite{meyer2012study} and thus wasting large storage spaces.  Storage systems use deduplication (dedup) techniques to eliminate redundancy. Dedup eliminates the need to upload and store redundant copies of data by verifying whether a data already exist in its storage before each upload. If the check is valid, then the data is not uploaded, and simply the corresponding user is added as owner to the existing file. The file link is sent to the requested user upon successful verification of Proof-of-Ownership. Dedup is broadly classified into two categories: client-controlled (C-DEDU) and server-controlled (S-DEDU). In C-DEDU, the user instead of sending the entire file to a cloud storage provider (CSP), he sends only a $tag$ computed on file to check for duplication \cite{youn2015necessity}. In S-DEDU, the cloud server checks for the duplication when it receives data from the user \cite{liang2019game}. Although there are many advantages, dedup introduces interesting challenges. First, to prevent a CSP from accessing sensitive information, it is a common practice to encrypt the data before uploading. If the data is encrypted using conventional encryption techniques, then it is difficult to apply deduplication techniques. Because two identical files that are encrypted with two different keys will generate two different cipher-texts which cannot be compared for similarity. Encrypted data dedup schemes are proposed based on convergent encryption \cite{douceur2002reclaiming,li2014hybrid,bellare2013message,keelveedhi2013dupless}, secret sharing \cite{li2015secure}, proof-of-ownerships \cite{yan2016deduplication}, keyword search \cite{wen2015bdo}, and password-authenticated key exchange \cite{li2015secure}. 

Second, it is clear that the greatest beneficiary of the dedup is the CSP as it saves storage cost. It is required to motivate cloud users to opt for dedup by offering incentives/discounts on storage fee. In literature, several schemes are proposed for secure deduplication, but only a few schemes \cite{miao2015payment, liu2015secure,youn2015necessity,rabotka2016evaluation, liang2019game,wang2019blockchain} discuss the incentives in deduplication. Miao et al. \cite{miao2015payment} presents an incentive scheme which reduces the storage fee of the cloud user when opted for deduplication. Liu et al. \cite{liu2015secure} states that the incentives are necessary for attracting cloud user towards deduplication. Youn et al. \cite{youn2015necessity} necessitate the need for incentives for the first data uploader. Rabotka et al. \cite{rabotka2016evaluation} discusses the role of incentives in various secure deduplication schemes. Liang et al. \cite{liang2019game} constructs a game-theoretic model to compute the bounds on incentives a CSP can offer to cloud user for opting dedup. Recently, Wang et al. \cite{wang2019blockchain} discusses the role of Blockchain in facilitating payments between CSP and cloud users. 

In most of the dedup schemes, the incentives are computed based on the dedup rate ($n_{d}^{c}(t)$). Dedup rate is defined as the number of cloud users holding a data $d$ and opted for deduplication at cloud $c$ at time $t$. Even though if the best incentive mechanism is available, either the cloud provider has to be trusted by the cloud user for fair computation of dedup rate or a trusted party has to be recruited for computing dedup rate correctly. Another problem in existing schemes is that both the CSP and the cloud user assume a trusted party like a Bank, to facilitates payment transfers between them. However, hiring a trusted party is costly and finding an ideal trusted party which will behave honestly at all times is difficult.

The recent progress in Blockchain technology allows a public Blockchain network to emulate the properties of a trusted party. The public Blockchain network is trusted for the immutability of data it possesses, the correctness of the code (smart contract) execution in its environment and its availability.

In this paper, we propose a new Blockchain-based secure cloud storage system where no party can influence the dedup rate, and also provides fair payments. We guarantee a fair dedup rate even if the CSP is untrusted and the cloud users are rational. We employ a convergent encryption (CE) scheme for providing data privacy and a proof-of-ownership (PoP) scheme for proving ownership of duplicated data. We assume that both CE and PoP schemes as black-boxes in our model, and we solely focus on designing a Blockchain-based secure cloud storage system with a new incentive mechanism.

We summarize the contributions of this paper as follows:
\begin{enumerate}
	\item The contributions in this paper are two-fold: First, we design a new incentive mechanism, and second, we design a new Blockchain-based dedup scheme by leveraging the immutability, trust, and correctness properties of a public Blockchain network.
	\item The proposed incentive mechanism motivates cloud users to choose dedup while ensuring profits for CSP. Experimental analysis shows that the proposed incentive mechanism is individually rational and incentive compatible for both users and CSP.
	\item  As most of the existing schemes focus on secure deduplication, we propose a dedup scheme which emphasizes correctness of dedup rate and fair payments between cloud user and CSPs. We design a smart contract ($\mathcal{B}_{DEDU}$) to realize the correctness and fairness of the proposed scheme.
	\item  We implement the proposed smart contracts using solidity and execute them on a private Ethereum network which emulates the public Ethereum network. We test the proposed smart contract for the publicly available dataset and present the transactional and financial costs of interacting with the smart contract.
\end{enumerate}

\section{Related work}
In this section, we first discuss the existing literature of incentives in secure data deduplication, and then we discuss the schemes related to Blockchain-based data deduplication. 
Miao et al. \cite{miao2015payment} proposed one of the first works to have studied incentives in data deduplication. Their scheme encourages the cloud users to take part in data deduplication by offering incentives on storage fee. They show that the cloud users significantly saves the storage fee, and the CSP saves the storage costs when opted for dedup. However, they guarantee the fairness of storage fee by assuming a trusted party, which will compute the data deduplication rate. Another drawback is that their scheme fails to guarantee incentive compatibility for the CSP. 
 It is also not clear how the incentives are passed to cloud users who have already chosen deduplication when the new cloud users with the same data opt for deduplication. 
 Robatka and Mannan \cite{rabotka2016evaluation} have discussed various secure deduplication schemes and analyzed the incentive required for a CSP to opt those schemes. They show that CSPs do not have incentive or have tiny incentive to opt for deduplication when schemes like client-side encryption, DupLess \cite{bellare2013message}, probabilistic upload solutions \cite{harnik2010side} are used.  
 Liu et al. \cite{liu2015secure} abstractly discussed rewarding cloud users for adopting deduplication. However, they have not provided any concrete incentive model.
 Armknecht et al. \cite{armknecht2015transparent} proposed ClearBox, which encourages the CSPs to declare the deduplication rate to cloud users truthfully. They use an additional gateway to compute and attest the deduplication rate at the end of each round. However, the gateway can be compromised to report false dedup rate, and as it is a centralized entity, its availability cannot be guaranteed. Another drawback is that they have neither discussed a concrete incentive scheme nor discussed a payment mechanism between CSP and users.
 Youn et al. \cite{youn2015necessity} discussed disadvantages like privacy-disclosure risk for the first uploader of the data and the necessity of incentives to compensate for the first uploader disadvantages.
 Wang et al. \cite{wang2015modeling} constructed a defense scheme for the side-channel attacks in deduplication schemes. They modeled the attack by considering economic factors as a non-cooperative game between CSP and an attacker. 
 Liang et al. \cite{liang2019game} proposed a game-theoretic analysis of deduplication scheme. They constructed an incentive scheme based on a non-cooperative game between CSP and users. They also computed bounds on the incentives a CSP can offer to a cloud user for choosing dedup. Their extensive experimental analysis shows that their incentive mechanism satisfies the incentive rationality and incentive compatibility of both CSP and users. The main drawback in their scheme is that the CSP announces the storage fee based on dedup rate and its utility at specific intervals of time. A malicious CSP may report false deduplication rate and utility. According to \cite{liang2019game}, CSP adjusts the storage fee according to deduplication rate as the time passes, but these discounts at later times may lead to discrepancies in the storage fee paid by all the users who hold the same data and chosen deduplication.

Recently, some works\cite{li2018deduplication,wang2019blockchain,ming2022blockchain,huang2022blockchain} discussed deduplication using Blockchain network. The authors in \cite{li2018deduplication} use Blockchain to store tags computed on the files stored in the cloud.  After downloading data from the CSP, the cloud users can verify the tag of the downloaded data with the tag stored in Blockchain. Their objective is to provide integrity to the data stored in the cloud. Another Blockchain-based deduplication scheme is presented in \cite{wang2019blockchain}. They use a smart contract to facilitate payments between a CSP and a cloud user. However, the storage fee is computed by the CSP, and hence, a malicious CSP may report false storage fee. Another drawback is that they assume a fixed storage fee for all users irrespective of deduplication rate. They have also not discussed how their scheme realizes fair payments. Ming et al. \cite{ming2022blockchain} proposed a smart contract based deduplication scheme where a smart contract stores index of all the data stored at edge nodes of a cloud server. Whenever there is a request for data storage, the cloud searches the index stored at the smart contract and takes a decision to either store or to deduplicate the data. Huang et al. \cite{huang2022blockchain} also incorporated Blockchain to have efficient arbitration and also to distribute incentives. Similar to the proposed scheme, their scheme also achieves uniform payments by refunding the first uploader of the file whenever a new user opts for deduplication.
Recently, Li et al. \cite{li2022blockchain} and Song et al. \cite{song2023blockchain} demonstrated integrity auditing of deduplicated data using Blockchain. When compared to the above-discussed schemes, our proposed scheme not only provides confidentiality, integrity, incentives but also provides fair payments without a trusted party. 
\section{System model and problem statement}  
\subsection{System model}
Let $\mathbb{S} =\{\mathcal{C}, \mathcal{U}\}$ be the cloud storage system with a smart contract $\mathcal{B}_{DEDU}$ running on a public Blockchain network $\mathcal{BC}$ where $\mathcal{C} = \{c_{1}, c_{2},...,c_{C}\}$ is the set of CSPs, and $\mathcal{U} = \{u_{1}, u_{2},... ,u_{U}\}$ is the set of cloud users.
\par CSPs provide data storage service to cloud users. If a CSP receives a data storage request from a cloud user, it will check whether any cloud user has previously stored the data in its storage. If the check is negative, then it will ask the user to encrypt and upload the data. Otherwise, it will (1) verify the proof-of-ownership of the data and (2) issues a file link to the user to access the pre-stored data. 
\par Many cloud users may exist in the system, and some of them may request to store the same data. If they all accept the deduplication, then only one copy of that data is stored in the cloud. Let $\mathcal{D}=\{d_{1}, d_{2},..., d_{D}\}$ be the set of data that the users may wish to store in the cloud. Each $d \in \mathcal{D}$ belongs to at least one user. Let $N_{d}^{c}(t)$ represent the number of users having the same data $d$ at time $t$ therefore $N_{d}^{c}(t) \ge 1$ and $\Sigma_{d \in D} N_{d} = \mathcal{U}$.
\par  A smart contract facilitates fair payments. $\mathcal{B}_{DEDU}$ assures the users for a fair dedup rate (based on which the fee is computed) and assures the CSPs for a fair payment.
\par A public Blockchain network $\mathcal{BC}$ is maintained by a set of peers known as miners who will execute the $\mathcal{B}_{DEDU}$ according to an underlying consensus algorithm.

 As shown in Figure \ref{fig:BDEDUarchitecture}, the interactions between the entities in $\mathbb{S}$ are as follows\footnote{For the sake of simplicity, we assume single CSP and a single user in the overview. An inter CSP model is discussed later.}: 1) A CSP design and deploys a $\mathcal{B}_{DEDU}$ on $\mathcal{BC}$. 2) CSP retrieves $\mathcal{B}_{DEDU}$ address from $\mathcal{BC}$. 3) CSP announces $\mathcal{B}_{DEDU}$ address and ABI on a public platform. 4) A user sends a storage request to $\mathcal{B}_{DEDU}$. 5) $\mathcal{B}_{DEDU}$ computes and returns storage fee to be paid by the user. 6) The user sends the storage fee to $\mathcal{B}_{DEDU}$. 7) Depending upon the response from $\mathcal{B}_{DEDU}$ user sends (data or (tag and PoP)) to CSP. 8) CSP requests $\mathcal{B}_{DEDU}$ for storage fee information. 9) $\mathcal{B}_{DEDU}$ sends the storage fee information to CSP.  10) If the user sent the tag and PoP, then CSP verifies the PoP, and if it is correct, it will send a confirmation to $\mathcal{B}_{DEDU}$. If the user sends encrypted data, it will simply send a confirmation to $\mathcal{B}_{DEDU}$.  11) CSP sends the stored file link to the user. 12) The user sends the confirmation of receipt of file link to $\mathcal{B}_{DEDU}$. 13) $\mathcal{B}_{DEDU}$ sends the storage fee to CSP.
\begin{figure}[h!]
	\includegraphics[height=8cm,width=\textwidth]{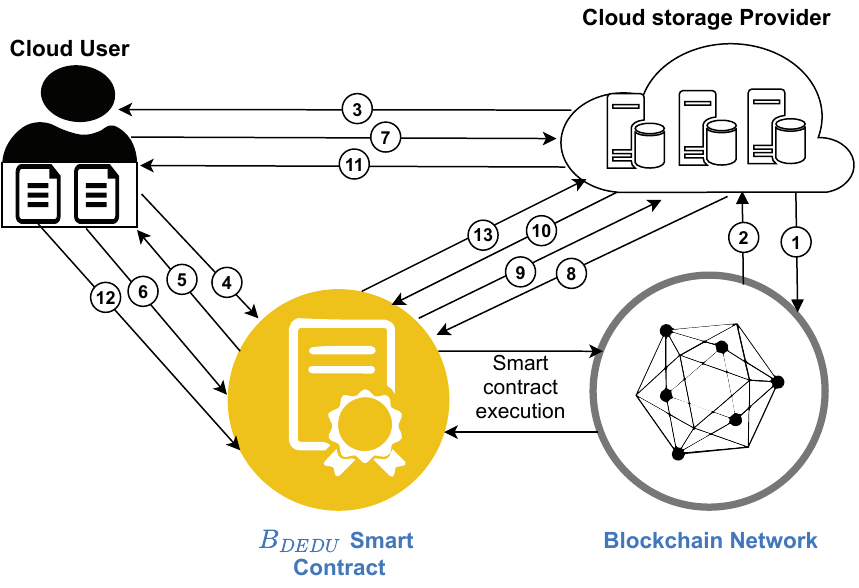}
	\caption{Overview of proposed system}
	\label{fig:BDEDUarchitecture}
\end{figure}
\subsection{Economic model}
The pricing model is pay-per-use, as often used in practice. Every party is rational and tries to maximize its utility, and the utility is based on their interactions with the proposed system.
\subsubsection{Utility of Cloud user}
Let a cloud user $u\in\mathcal{U}$ stores data $d\in\mathcal{D}$ at a cloud managed by a CSP $c\in\mathcal{C}$. Let $U_{u}^{0}(t)$ and $U_{u}^{1}(t)$ be the utilities of $u$ when he do not adopt dedup and adopt dedup with $\mathcal{B}_{DEDU}$ respectively.
\begin{equation}
U_{u}^{0}(t) = P_{u}(t) - SF_{u}^{c}(t).
\end{equation}
Where $P_{u}(t)$ is the profit earned by storing data in cloud, $SF_{u}^{c}(t)$ is the storage fee $u$ has to pay to $c$.

Now let us define the utility of a user when dedup with $\mathcal{B}_{DEDU}$ is adopted.
\begin{equation}\label{eq:Uu1}
U_{u}^{1}(t) = P_{u}(t) - \frac{SF_{u}^{c}(t)}{n^{c}_{d}(t)} - EF_{u}^{c}(t) - I_{u}(t)
\end{equation}
Where $n_{d}^{c}(t)$ is the data dedup rate at $c$, i.e., the number of users having data $d$ opted for dedup before $c$ receive $u$'s request. In our model, the fee depends on two parameters (1) Storage fee ($SF_{u}^{c}(t)$) - computed according to the current dedup rate and (2) Extra fee ($EF_{u}^{c}(t)$) - a cost paid by the user apart from storage fee to make the cloud provider incentive compatible. The cost of interacting with the smart contract is represented with $I_{u}(t)$.
\subsubsection{Utility of CSP}
Let $U_{c}^{0}(t)$ be the utility of a CSP when dedup is not applied.
\begin{equation}
U_{c}^{0}(t) = \sum_{d\in \mathcal{D}} N_{d}^{c}(t) * (SF_{u}^{c}(t)-SC_{c}^{u}(t))
\end{equation}
where $N_{d}^{c}(t)$ is the number users having data $d\in\mathcal{D}$. $SF_{u}^{c}(t)$ is the storage fee received by $c$ and $SC_{c}^{u}(t)$ is the cost incurred to $c$ for storing data.

Let $U_{c}^{1}(t)$ be the utility of CSP when dedup with $\mathcal{B}_{DEDU}$ is adopted.
\begin{multline}
U_{c}^{1}(t) = \sum_{d\in \mathcal{D}} (N^{c}_{d}(t) - n^{c}_{d}(t) + 1) * (SF_{u}^{c}(t) - SC_{c}^{u}(t))  \\ 
 + \sum_{d\in \mathcal{D}} n^{c}_{d}(t) * EF_{u}^{c}(t) 
  - \sum_{d\in \mathcal{D}} N^{c}_{d}(t) * I_{c}(t) - I_{deploy}(t) \\
\end{multline}
where $I_{c}^{deploy}(t)$ is the cost of deploying smart contract and $I_{c}(t)$ is the cost of interacting with smart contract. 

The summary of utilities of cloud user and CSP are given in Table \ref{tab:Utilitiessummary}.

\begin{table}
	\centering
	\begin{tabular}{|p{0.04\textwidth}|p{0.27\textwidth}|p{0.6\textwidth}|}
		\hline
		 & \textbf{without dedup} & \textbf{dedup with $\mathcal{B}_{DEDU}$} \\ \hline
		\textbf{User} & $U_{u}^{0}(t) = P_{u}(t) - SF_{u}^{c}(t)$ & $U_{u}^{1}(t) = P_{u}(t) -  \frac{SF_{u}^{c}(t)}{n^{c}_{d}(t)} - EF_{u}^{c}(t) - I_{u}(t)$\\ \hline
		\textbf{CSP} & $U_{c}^{0}(t) = \sum_{d\in \mathcal{D}} N_{d}^{c}(t) * (SF_{u}^{c}(t)-SC_{c}^{u}(t))$& $
		 U_{c}^{1}(t) =  \sum_{d\in \mathcal{D}} (N^{c}_{d}(t) - n^{c}_{d}(t) + 1) * (SF_{u}^{c}(t) - SC_{c}^{u}(t)) $  
		  $+ \sum_{d\in \mathcal{D}} n^{c}_{d}(t) * EF_{u}^{c}(t) $
		 $ - \sum_{d\in \mathcal{D}} N^{c}_{d}(t) * I_{c}(t) - I_{deploy}$ \\
		\hline
	\end{tabular}
\caption{Utilities of cloud user and CSP}
\label{tab:Utilitiessummary}
\end{table} 

\subsection{Problem statement}\label{def:pb}
\begin{enumerate}
	\item The first problem is to incentivize the cloud user for choosing deduplication. CSP offers this incentivization in the form of discounts on storage fee. These discounts will help to motivate the cloud user for choosing dedup with $\mathcal{B}_{DEDU}$.
	\item The second problem is to find the bounds on discount a CSP can offer so that a CSP/user earns more profits when compared to storing data without dedup.
	\item  The third problem is to compute the correct deduplication rate based on which the storage fee is computed.
	\item The fourth problem is to distribute the storage fee equally among all the users who store data $d$ at $c$ irrespective of when their request arrives.
	\item Furthermore, the last problem is to facilitate fair payments between cloud users and CSP.
\end{enumerate}
Our objective is to design a fair deduplication scheme with the following design goals.
\begin{itemize}[leftmargin=*,align=left]
\item[\textbf{Individually rational} (IR-constraint):] An incentive mechanism is said to be individually rational, if a rational user/CSP choosing deduplication with $\mathcal{B}_{DEDU}$ obtains a non-negative utility. That is $\forall u \in \mathcal{U}$ with $\mathcal{B}_{DEDU}$ $U_{u}^{1}(t) \ge 0$ and $\forall c \in \mathcal{C}$  $U_{c}^{1}(t) \ge 0$.
\item[\textbf{Incentive compatibility} (IC-constraint):] \label{def:IC-constraint}  An incentive mechanism is said to be incentive compatible, if the CSP or cloud users cannot gain more profits from not adopting dedup. The best strategy for a cloud user is to opt deduplication with $\mathcal{B}_{DEDU}$. That is $\forall u \in \mathcal{U}$ $U_{u}^{1}(t)-U_{u}^{0}(t) \ge 0$. The CSP obtains more profits by choosing deduplication with $\mathcal{B}_{DEDU}$. That is $\forall c \in \mathcal{C}$ $U_{c}^{1}(t)-U_{c}^{0}(t) \ge 0$.
\item [\textbf{Correctness}:] A dedup scheme is said to be correct if the storage fee for any user is defined by the following three factors: (1) size of the data to be stored (2) dedup rate and (3) auxiliary pricing function $p(|d|)$ at time $t$ previously set by CSP. In other words, no party (CSP/cloud users/miners) should influence the storage fee to be paid by the user except the above three factors.
\item[\textbf{Uniform payments}:] A dedup scheme is said to be uniform if every user holding data $d$ at $c$ pays the same fee irrespective of when their request arrives.
\item[\textbf{Fair payments}:] A dedup scheme is said to be fair if an honest CSP receives storage fee if and only if an honest user receives the file link of the data stored in the cloud managed by CSP.
\end{itemize}
\subsection{Threat model}
\begin{itemize}
\item \textbf{Threats to dedup rate}: Both the cloud provider and the user are greedy to increase their utility as much as possible. A malicious cloud provider or user may try to influence some of the miners of the Blockchain network to include fraudulent transactions (containing incorrect dedup rate) to increase their utility. 
\item \textbf{Threats to fairness}: Both the cloud provider and the user are allowed to abort the protocol abruptly. However, no party can gain financial advantage by aborting the protocol.
\item \textbf{Standard Blockchain model}: We assume a standard Blockchain threat model from \cite{kosba2016hawk,juels2015ring} such that the Blockchain is trusted for immutability and availability but not for privacy. We also do not consider future software updates and potential bugs of the current version of Ethereum. 
\end{itemize}
\subsection{Notations}
The notations used in this paper are presented in Table \ref{tab:notations}.
\begin{table}[!h]
	\centering
	\begin{tabular}{p{2.2cm}|p{0.85\textwidth}}
		\hline
		Symbol & Meaning  \\ \hline
		$c$ , $\mathcal{C}$ &  Cloud storage provider $c \in \mathcal{C}$, set of cloud storage providers \\
		$u$, $\mathcal{U}$ & Cloud user $u \in \mathcal{U}$, set of users  \\
		$d$, $\mathcal{D}$ & Data file $d \in \mathcal{D}$, set of data \\
		$U_{u}^{0}(t)$,  $U_{u}^{1}(t)$ & Utility of $u$ without dedup and dedup with $\mathcal{B}_{DEDU}$ at time $t$ \\
		$U_{c}^{0}(t)$, $U_{c}^{1}(t)$ & Utility of a $c$ without dedup and dedup with $\mathcal{B}_{DEDU}$ at time $t$ \\
		$p_{u}(t)$ & Profits earned by $u$ for storing data in cloud at time $t$\\
		$SF_{u}^{c}(t)$ & Storage fee paid by $u$ to $c$ at time $t$ \\
		$EF_{u}^{c}(t)$ & Extra fee paid by $u$ to $c$ apart form storage fee at time $t$\\
		$SC_{c}^{u}(t)$ & Storage cost incurred by $c$ for storing $u$'s data at time $t$\\
		$N_{d}^{c}(t)$ & Number of users having the data $d$ at $c$ at time $t$\\
		$n_{d}^{c}(t)$ & Deduplication rate of a data $d$ at cloud $c$ at time $t$\\
		$I_{deploy}(t)$ & Cost of deploying $\mathcal{B}_{DEDU}$ incurred to CSP at time $t$\\
		$I_{u}(t), I_{c}(t)$ & Cost of interacting with $\mathcal{B}_{DEDU}$ incurred to $u$ and $c$ at time $t$\\
		$\tau_{sub}$,$\tau_{p}$,$\tau_{c1}$,$\tau_{c2}$ & Timing parameters \\
		$\$d$, $\$paid$ & Monetary parameters representing deposit and fee paid by $u$ \\
		$numReq$ & Number of requests received by $\mathcal{B}_{DEDU}$  for storing file $d$\\
		$cPay$ & Current fee a user has to pay to store a file $d$\\
		$rState$ & Storage request state\\
		$ID$ & Public identity of a user which is unique across $\mathcal{BC}$\\
		$pay$ & The fee computed by $\mathcal{B}_{DEDU}$ to store a file $d$\\
		$k$ & Value on which timing parameters are computed \\
		$reqNum$ & Id of a request \\
		$uTAB$ & Special data structure maintained in the $\mathcal{B}_{DEDU}$ contract storage. $uTAB$ is of the form $uTAB($ $tag$, $numReq$, $cPay$, $requests ($ 	$ID$, $\tau_{sub}$, $\tau_{p}$, $\tau_{c1}$, $\tau_{c2}$, $rState$, $pay$, $\$paid$$))$. $uTAB$ is indexed by $tag$ and $requests$ is indexed by $reqNum$. \\
		\hline
	\end{tabular}
	\caption{Notations.}
	\label{tab:notations}
\end{table}

\section{Preliminaries}
\subsection{Convergent Encryption (CE)}
A convergent encryption scheme consists of four algorithms.
\begin{itemize}
	\item $K \leftarrow KeyGen_{CE}(d)$. The key generation algorithm takes the data file $d$ as input and outputs a convergent key $K$.
	\item $C \leftarrow Encrypt_{CE}(K,d)$. The symmetric encryption algorithm takes the convergent key $K$ and the file $d$ as input and generates a cipher-text $C$ as an output.
	\item $d \leftarrow Decrypt_{CE}(K,C)$. The decryption algorithm takes both the convergent key $K$ and cipher-text $C$ as input and outputs the original file $d$.
	\item $tag \leftarrow TG_{CE}(C)$. $TG$ is a tag generation algorithm which takes the cipher-text $C$ as input and outputs a hash value $tag$.
\end{itemize}
\subsection{Blockchain}
Bitcoin \cite{nakamoto2008bitcoin} is the first public Blockchain network proposed by Satoshi Nakamoto. Although its primary purpose is to transfer cryptocurrency known as bitcoin between peers without a trusted central authority, its fundamental concepts can be used as building blocks to construct many decentralized applications. In Blockchain network, the transactions are stored in a one-way append distributed ledger (blockchain) and a secure consensus protocol is executed by a set of decentralized peers known as miners to agree on a common global state of the ledger. The ledger holds the complete transaction history of the network. 
The blockchain is a sequence of blocks. Bitcoin uses the proof-of-work (PoW) consensus algorithm and most of the public Blockchain networks proposed later follows Bitcoin's PoW and commonly called as altcoins/Nakamoto-style ledgers.

In PoW Blockchain networks, a block $b$ is of the form $b=\langle h,t,c \rangle$ where $h \in \{0,1\}^{s}, t \in \{0,1\}^{*}, c \in \mathbb{N}$ satisfying the conditions $(G(c,H(h,t))< D)$ and $(c \leq q)$
where $G(\cdot)$ and $H(\cdot)$ are cryptographic hash functions which outputs in length of $s$ bits, $D \in \mathbb{N}$ is known as block difficulty level set by the consensus algorithm and $q \in \mathbb{N}\footnote{implementation dependent}$. Let $b^{'} = \langle h^{'},t^{'},c^{'} \rangle$, be the right most block in the chain. The chain is extended to a longer chain by adding a new block $b = \langle h,t,c \rangle$ to $b^{'}$ such that it satisfies $h=G(c^{'},H(h^{'},t^{'}))$.
A consensus algorithm guarantees the security of the blockchain. PoW algorithm makes the miners compete to generate a new block periodically. The miners are rewarded for mining new blocks in the form of currency native to the Blockchain network (bitcoins in the case of Bitcoin). The PoW algorithm has several rules out of which the following two rules ensure the correctness of the execution of transactions. (1)While generating a new block, the miners verify all the transactions going to be added in a block, and (2) Miners check the validity of a new block generated by other miners before adding it to their local blockchain.  These verification steps make the Blockchain network trusted for correctness. The hardness in solving PoW puzzle makes the blockchain immutable, and a large number of participating miners ensures availability of Blockchain network. We use these properties to achieve fair privacy-preserving aggregation without a trusted third party between data owners and data buyer.
\subsection{Smart Contract}
A smart contract is a program stored in a blockchain and executed by the mining nodes of the network. The smart contract can hold many contractual clauses between mutually distrusted parties. Similar to transactions, the smart contract is also executed by miners and, its execution correctness is guaranteed by miners executing the consensus protocol. Assuming the underlying consensus algorithm of a Blockchain is secure, the smart contract can be thought of a program executed by a trusted global machine that will faithfully execute every instruction \cite{dong2017betrayal}.

Ethereum \cite{wood2014ethereum} is a major Blockchain network supporting smart contracts. A smart contract in Ethereum is a piece of code having a contract address, balance, storage and state. 
The changes in the smart contract storage change its state.
A $\mathcal{SC}$ is a collection of functions and data similar to a class in object-oriented programming. Whenever a transaction is sent to the contract's address with a function signature, then the corresponding function code is executed by an Ethereum virtual machine (in mining node).
Unlike Bitcoin, the Ethereum scripting language is Turing-complete which motivates the developers to write smart contracts for a wide variety of applications. However, to avoid developing complex smart contracts which may take long execution and verification times jeopardizing the security of the entire Blockchain network, Ethereum introduced the concept of gas. Every opcode in Ethereum scripting language costs a pre-defined gas. Whenever a transaction makes a function call, the total gas for the function execution is computed and converted into Ether which is the native cryptocurrency of Ethereum. This Ether is charged for the transaction initiator's Ethereum account and transferred to the miner who successfully mined a new block which includes the transaction.

\section{Proposed incentive mechanism and deduplication scheme}
\subsection{Incentive mechanism}
To make a CSP incentive-compatible we introduced a new parameter $EF_{u}^{c}$ in \eqref{eq:Uu1}. In this section, we find the minimum and maximum values of $EF_{u}^{c}(t)$ so that a user/CSP obtains non-negative utility when opted for dedup with $\mathcal{B}_{DEDU}$.
\\ According to the IC-Constraint in definition \ref{def:IC-constraint}, a CSP $c$ is said to be incentive compatible if 
\begin{equation}\label{eq:cloudUtil}
U_{c}^{1}(t) - U_{c}^{0}(t) \ge 0 
\end{equation}
Substituting the utilities from Table \ref{tab:Utilitiessummary} in  \eqref{eq:cloudUtil}
\begin{multline*}
\sum_{d\in \mathcal{D}}(N_{d}^{c}(t)-n_{d}^{c}(t)+1) * (SF_{u}^{c}(t)-SC_{c}^{u}(t)) + \sum_{d\in \mathcal{D}} n_{d}^{c}(t) * EF_{u}^{c}(t) \\ - \sum_{d\in \mathcal{D}} N_{d}^{c}(t) * I_{c}(t) - I_{deploy}(t) - \sum_{d\in \mathcal{D}} N_{d}^{c}(t) * (SF_{u}^{c}(t)-SC_{c}^{u}(t) \ge 0
\end{multline*}
Assuming the cost of interacting with $\mathcal{B}_{DEDU}$ is negligible when compared to $SF_{u}^{c}(t)$ and $SC_{c}^{u}(t)$, we have
\begin{multline*}
\sum_{d\in \mathcal{D}}(N_{d}^{c}(t)-n_{d}^{c}(t)+1) * (SF_{u}^{c}(t)-SC_{c}^{u}(t)) + \sum_{d\in \mathcal{D}} n_{d}^{c}(t) * EF_{u}^{c}(t)\\ - \sum_{d\in \mathcal{D}} (N_{d}^{c}(t)) * (SF_{u}^{c}(t)-SC_{c}^{u}(t)) \ge 0
\end{multline*}
For a single data file we have
\begin{equation*}
(1-n_{d}^{c}(t)) * (SF_{u}^{c}(t)-SC_{c}^{u}(t)) + n_{d}^{c}(t) * EF_{u}^{c}(t) \ge 0
\end{equation*}
\begin{equation*}
n_{d}^{c}(t) * EF_{u}^{c}(t) \ge (n_{d}^{c}(t)-1) * (SF_{u}^{c}(t)-SC_{c}^{u}(t))
\end{equation*}
\begin{equation} \label{eq:minCF}
EF_{u}^{c}(t) \ge \frac{n_{d}^{c}(t)-1}{n_{d}^{c}(t)} * (SF_{u}^{c}(t)-SC_{c}^{u}(t))
\end{equation}
Now, we find the maximum value of $EF_{u}^{c}(t)$ so that a user is incentive-compatible when opted for dedup with $\mathcal{B}_{DEDU}$. According to the IC-Constraint in definition \ref{def:IC-constraint}, a user $u$ is said to be incentive-compatible if 
\begin{equation}\label{eq:userUtil}
U_{u}^{1}(t) - U_{u}^{0}(t) \ge 0
\end{equation}
Substituting the utilities from Table \ref{tab:Utilitiessummary} in  \eqref{eq:userUtil}
\begin{equation*}
P_{u}(t) - \frac{SF_{u}^{c}(t)}{n_{d}^{c}(t)} - EF_{u}^{c}(t) - I_{u}(t) - P_{u}(t) + SF_{u}^{c}(t) \ge 0
\end{equation*}
Assuming the cost of interacting with $\mathcal{B}_{DEDU}$ is negligible when compared to $SF_{u}^{c}(t)$, we have
\begin{equation*}
SF_{u}^{c}(t) - \frac{SF_{u}^{c}(t)}{n_{d}^{c}(t)} - EF_{u}^{c}(t) \ge 0
\end{equation*}
\begin{equation}\label{eq:maxEF}
EF_{u}^{c}(t) \le  \frac{n_{d}^{c}(t)-1}{n_{d}^{c}(t)} * SF_{u}^{c}(t)
\end{equation}
From \eqref{eq:minCF} and \eqref{eq:maxEF} the minimum and maximum values of $EF_{u}^{c}(t)$ are set as 
\begin{equation}
 EF_{u}^{c}(t) =\Bigg[ \frac{n_{d}^{c}(t)-1}{n_{d}^{c}(t)} * (SF_{u}^{c}(t)-SC_{c}^{u}(t)), \frac{n_{d}^{c}(t)-1}{n_{d}^{c}(t)} * SF_{u}^{c}(t) \Bigg] 
\end{equation}
when the cost of interacting with $\mathcal{B}_{DEDU}$ is considered then
\begin{multline*}
EF_{u}^{c}(t) =\Bigg[ \frac{n_{d}^{c}(t)-1}{n_{d}^{c}(t)} * (SF_{u}^{c}(t)-SC_{c}^{u}(t)) + (n_{d}^{c}(t) * I_{c}(t)) + I_{deploy}, \\(\frac{n_{d}^{c}(t)-1}{n_{d}^{c}(t)} * SF_{u}^{c}(t)) + I_{u}(t) \Bigg] 
\end{multline*}
\subsection{Blockchain-based deduplication scheme}
In this section, we discuss a Blockchain-based cloud storage system which consists of a smart contract $\mathcal{B}_{DEDU}$, a protocol to interact with $\mathcal{B}_{DEDU}$ and a public Blockchain network to deploy $\mathcal{B}_{DEDU}$. At the end of the section, we provide an analysis of our proposed smart contract.
\subsubsection{Assumptions}
\begin{enumerate}
	\item We assume that there are no unintentional system failures which may affect the utilities of the CSP and the user. 
	\item We assume that the CSP, the cloud, the smart contract and the Blockchain network are available all the time.
\end{enumerate}
\subsubsection{$\mathcal{B}_{DEDU}$ contract}\label{contract:1}
$\mathcal{B}_{DEDU}$ is a contract between a CSP $c$ and a cloud user $u$. The high-level idea is that if both $c$ and $u$ are honest, then $c$ will receive the fee\footnote{From here on we call the amount paid by $u$ as fee which includes both $SF_{u}^{c}(t)$ and $EF_{u}^{c}(t)$ values} paid by $u$ and $u$ will receive the file link to access the $u$'s file in the cloud managed by $c$. The fee is computed according to rules encoded in $\mathcal{B}_{DEDU}$ contract. The clauses in the $\mathcal{B}_{DEDU}$ contract are as follows:
\begin{enumerate}[label={(\arabic*)}]
	\item\label{clause1} All parties agree on timing parameters $\tau_{p}<\tau_{c1}<\tau_{c2}$ and two payment parameters: $SF_{u}^{c}(t)$ and $EF_{u}^{c}(t)$. 
	\item\label{clause2} $c$ creates a smart contract ($\mathcal{B}_{DEDU}$) for facilitating payments for cloud storage deduplication. $c$ deploys the $\mathcal{B}_{DEDU}$ on a public Blockchain network and announces the smart contract address and smart contract ABI on a public platform (like a website/bulletin board). 
	\item\label{clause3} After verifying the contract details at the contract address, a user $u$ if willing to store data at cloud managed by $c$, has to send a request to $\mathcal{B}_{DEDU}$ along with some safety deposit $\$d$. This safe deposit is required to penalize $u$ for sending false requests. $u$'s request includes a $tag$ computed from the encrypted file, and length of the file in bits.
	\item\label{clause4} After receiving the request, $\mathcal{B}_{DEDU}$ checks whether the $tag$ sent by $u$ is received previously. If the check is valid, it will compute the fee as $(\frac{SF^{c}_{u}(t)*|d|}{n_{d}^{c}(t)} + EF_{u}^{c}(t))$ and sends this information to $u$. Otherwise it will compute the fee as $SF^{c}_{u}(t)*|d| + EF_{u}^{c}(t)$ and sends it to $u$. $|d|$ is the length of the data to be stored in bits.
	\item\label{clause5} $u$ must send the fee to $\mathcal{B}_{DEDU}$ before $\tau > \tau_{p}$. If $u$ fails then his deposit $\$d$ is sent to $c$ and the request is marked as terminated. Otherwise $u$'s deposit $\$d$ is refunded. 
	\item\label{clause6} 
	$c$ has to send the confirmation message to $\mathcal{B}_{DEDU}$ before $\tau > \tau_{c1}$ acknowledging the receipt of file or correct PoP . Otherwise, the fee paid by $u$ is refunded, and the request is marked as terminated. $c$ should send the file link with correct access rights to $u$.
	\item\label{clause7} 
	$u$ has to send the confirmation message to $\mathcal{B}_{DEDU}$ before $\tau > \tau_{c2}$ acknowledging the receipt of file link.  Otherwise the fee paid by $u$ is refunded. If $u$ has sent the confirmation message before $\tau > \tau_{c2}$ then, (1) if $u$ is the first uploader of $d$ then the fee is sent to $c$ (2) if $u$ is not the first uploader of $d$, then the $EF_{u}^{c}(t)$ part of fee is sent to $c$ and the $SF_{u}^{c}(t)$ part of fee is distributed equally among all the users who hold the file link of $d$ before $u$. In either case, the request is marked as terminated and the value of $n_{d}^{c}(t)$ is incremented.
\end{enumerate}

\begin{center}
	\footnotesize
	\begin{longtable}{|p{1.2cm}p{0.9\textwidth}|}
	\hline
	  \multicolumn{2}{|c|}{$contract$-$\mathcal{B}_{DEDU}$} \\
 \textbf{init:}  & $SF:=0$, $EF:=0$, $uTAB:=\{\}$, $k:=0$ \\	
 \textbf{Create:} & Upon receiving ("create", $SPay$, $EPay$, $interval$) from a cloud storage provider $c$ \\
 & set $SF:=SPay$, $EF:=EPay$ and $k:=interval$\\
 \textbf{Request:} & Upon receiving ("request", $tag$, $|d|$, $\$d$) from a user $u$ \\
  &assert $ledger[u] \ge \$d$ \\
  &set $ledger[u]:=ledger[u] - \$d$ \\
  & if $(tag,*,*,*) \in uTAB$ \\
  & $\:$ if  $\exists \: request(*,*,*,*,*,active,*,*) \in uTAB[tag]$ \\
  & $\:$ $\:$ set $pay:=uTAB[tag].cPay$ \\
  & $\:$ else set $pay:=SF*|d|+EF$ \\
  & else \\
  & $\:$ set $pay:=SF*|d|+EF$ \\
  & $\:$ set $numReq:=0$\\
  & $\:$ set $uTAB:=uTAB \cup (tag,numReq,*,*)$ \\
  & set $ID:=u$, $rState:=waitForPay$, $\$paid:=0$, $reqNum:=uTAB[tag].numReq$ \\
  & set  $\tau_{sub}:=\tau$, $\tau_{p}:=\tau_{sub}+k$, $\tau_{c1}:=\tau_{p}+k$, $\tau_{c2}:=\tau_{c1}+k$ \\
  & set $uTAB[tag].requests$ := $uTAB[tag].requests$ $\cup$\\ 
  &  $\ \ \ \ \ \ \ \ \ \ \ \ \ \ \ \ \ \ \ \ \ \ \ \ \ \ \ \ \ \ \ \ $ ($ID$, $\tau_{sub}$, $\tau_{p}$, $\tau_{c1}$, $\tau_{c2}$, $rState$, $pay$, $\$paid$) \\
  & set $uTAB[tag].cPay:=pay$\\
  & set $uTAB[tag].numReq:=uTAB[tag].numReq+1$ \\
  & send ("pay", $tag$, $SF$, $pay$, $reqNum$, $\tau_{p}$, $\tau_{c1}$, $\tau_{c2}$) to user $u$\\
  \textbf{Pay:} & Upon receiving ("pay",$tag$, $reqNum$, $\$pay$) from a user $u$\\
  & assert $\tau \le \tau_{p}$ \\
  & assert $uTAB[tag].requests[reqNum].ID = u$ \\
  & assert $uTAB[tag].requests[reqNum].pay \ge \$pay$ \\
  & assert $uTAB[tag].requests[reqNum].rState:=waitForPay$ \\
  & assert $ledger[u] \ge \$pay$ \\
  & set $ledger[u] = ledger[u] - \$pay$ \\
  & set $uTAB[tag].requests[reqNum].\$paid:=\$pay$ \\
  & set $ledger[u] = ledger[u] + uTAB[tag].requests[reqNum].\$d$ \\
  & set $uTAB[tag].requests[reqNum].rState:=waitForCSPConf$ \\
  \textbf{CSPConf:} & Upon receiving ("cspConf", $tag$, $reqNum$) from a cloud storage provider $c$ \\
  & assert $\tau \le \tau_{c1}$ \\
  & assert $uTAB[tag].requests[reqNum].rState=waitForCSPConf$ \\
  & set $uTAB[tag].requests[reqNum].rState:=waitForCliConf$ \\
  \textbf{usrConf:} & Upon receiving ("usrconf", $tag$, $reqNum$) from a user $u$\\
  & assert $\tau \le \tau_{c2}$ \\
  & assert $uTAB[tag].requests[reqNum].ID = u$ \\
  & assert $uTAB[tag].requests[reqNum].rState=waitForCliConf$ \\
  & $\forall$ $i \in [0,uTAB[tag].numReq-2]$\\
  & $\:$ if $uTAB[tag].requests[i].rState=active$ \\
  & $\:$ $\:$ set $activeRequests:=activeRequests+1$ \\
  & if $activeRequests=0$\\
  & $\:$ set $ledger[c]:=ledger[c]+ uTAB[tag].requests[rNum].\$paid$ \\
  & else \\
  & $\:$ set $\$rem := uTAB[tag].requests[reqNum].\$paid - EF$ \\
  & $\:$ set $\$DF:= uTAB[tag].requests[reqNum].\$paid - \$rem$ \\
  & $\:$ set $ledger[c]:=ledger[c] + \$DF$\\
  & $\:$ $\forall$ $i \in [0,activeRequests]$\\
  & $\:$ $\:$ if $uTAB[tag].requests[i].rState=active$\\
  & $\:$ $\:$ $\:$ set $ledger[uTAB[tag].requests[i].ID]:=$ $ledger[uTAB[tag].requests[i].ID]$\\ 
  & $  \ \ \ \ \ \ \ \ \ \ \ \ \ \ \ \ \ \ \ \  \ \ \ \ \ \ \ \   \ \ \ \ \ \ \ \  \ \ \ \ \ \ \ \   \ \ \ \ \ \ \ \  \ \ \ \ \ \ \ \ + \frac{\$rem}{activeRequests}$\\
  & set $uTAB[tag].cPay:= \frac{SF}{activeRequests+2}+EF$\\
  & set $uTAB[tag].requests[reqNum].rState:=active$ \\
  \textbf{Refund:} & Upon receiving ("refund", $tag$, $reqNum$) from a user $u$\\
  & assert $uTAB[tag].requests[reqNum].ID = u$ \\
  & assert $(\tau > \tau_{c1}$ $\&\&$ $uTAB[tag].requests[reqNum].rState=waitForCSPConf)$
 $||$ \\ & $\ \ \ \ \ \ \ (\tau > \tau_{c2}$ $\&\&$ $uTAB[tag].requests[reqNum].rState=waitForCliConf))$  \\
  & set $ledger[uTAB[tag].requests[reqNum].ID]$ := \\ 
  &$ \ \ \ \ \ \ \ \ \ \ \ \ \ \ \ \ \ \ ledger[uTAB[tag].requests[reqNum].ID]$ + \\
  & $\ \ \ \ \ \ \ \ \ \ \ \ \ \ \ \ \ \ uTAB[tag].requests[reqNum].\$paid$\\
  \textbf{Claim:} & Upon receiving ("claim", $tag$, $reqNum$) from a cloud storage provider $c$ \\
  & assert $\tau>\tau_{p} \: \&\& \: uTAB[tag].requests[reqNum].rState:=waitForPay$ \\
  & set $ledger[c]:=ledger[c]+uTAB[tag].requests[reqNum].\$d$ \\
  \textbf{DeLink:} & Upon receiving ("deLink", $tag$, $reqNum$) from a user $u$\\
  & assert $uTAB[tag].requests[reqNum].ID = u$ \\
  & assert $uTAB[tag].requests[reqNum].rState=active$ \\
  & set $uTAB[tag].requests[reqNum].rState:=inActive$ \\
 \hline
	\end{longtable}
\captionof{figure}{$contract$-$\mathcal{B}_{DEDU}$}
\label{fig:naiveContract}
\end{center}
\begin{figure}[!h]
	\centering
	\footnotesize
\begin{tabular}{|p{1.6cm}p{0.9\textwidth}|}
	\hline
	\multicolumn{2}{|c|}{$Proto$-$\mathcal{B}_{DEDU}$} \\
				&	Let $(KeyGen_{CE}, Encrypt_{CE}, Decrypt_{CE}, TG_{CE})$ be a secure convergent encryption scheme. \\
	\textbf{\underline{As a cloud storage provider $c$:}} & \\
	\textbf{Create:} & Upon receiving ("create", $SPay$, $EPay$, $interval$) from environment $\mathcal{E}$ \\
				& send ("create", $SPay$, $EPay$, $interval$) to $contract$-$\mathcal{B}_{DEDU}$ \\
	\textbf{CSPConf:} & Upon receiving ("file", $d$, $tag$, $reqNum$) from user $u$ \\
    				& assert that $u$ has sent $\$pay$ to $contract$-$\mathcal{B}_{DEDU}$ with the same tag \\
    				& send ("cspconf", $tag$, $reqNum$) to $contract$-$\mathcal{B}_{DEDU}$\\
    				& store $d$ and send ("link", $tag$, $reqNum$, $L$) to user $u$ \\
    				& Upon receiving ("proof", $PoP$, $tag$, $reqNum$) from user $u$ \\
    				& assert that $PoP$ is the correct proof for $tag$ \\
    				& assert that $u$ has sent $\$pay$ to $contract$-$\mathcal{B}_{DEDU}$ with the same tag \\
    				& send ("cspConf", $tag$, $reqNum$) to $contract$-$\mathcal{B}_{DEDU}$\\
    				& store $d$ or add $u$ to the user list of $d$ and send ("link", $tag$, $reqNum$, $L$) to user $u$ \\
\textbf{Claim:} & Upon receiving ("claim", $tag$, $reqNum$) from environment $\mathcal{E}$ \\
			& send ("claim", $tag$, $reqNum$) to $contract$-$\mathcal{B}_{DEDU}$ \\
\textbf{DeLink:} & Upon receiving ("deLink", $tag$, $reqNum$, $L$) from user $u$ \\
			& assert that $u$ has sent "deLink" message to $contract$-$\mathcal{B}_{DEDU}$ with same $tag$ and $reqNum$ \\
			& disable the link $L$ for $u$ or remove $d$ if no other users have active links to $d$ \\
\textbf{\underline{As a user $u$:}} & \\    				
	\textbf{Request:} & Upon receiving ("request", $d$) from environment $\mathcal{E}$ \\
				  & compute $K \leftarrow KeyGen_{CE}(d)$\\
				  & $C\leftarrow Encrypt_{CE}(K,d)$ \\ 
				  & $tag\leftarrow TG_{CE}(C)$ \\
				  & send ("request", $tag$, $|d|$, $\$d$) to $contract$-$\mathcal{B}_{DEDU}$ \\
	\textbf{Pay:} & Upon receiving ("pay", $tag$, $SF$, $pay$, $reqNum$, $\tau_{p}$, $\tau_{c1}$, $\tau_{c2}$) from $contract$-$\mathcal{B}_{DEDU}$\\
			& assert that a request message has sent earlier with the same $tag$. \\
			&  send $("pay", tag, reqNum, \$pay)$ to $contract$-$\mathcal{B}_{DEDU}$ \\
			& if $SF*|d| = pay$ send ("file", $d$, $tag$, $reqNum$) to $c$\\
			& else send ("proof", $PoP$, $tag$, $reqNum$) to $c$ \\
	\textbf{usrConf:} & Upon receiving ("link", $tag$, $reqNum$ $L$) from $c$ \\
				& assert that $L$ is a correct link to File $d$ \\
				& assert that $c$ has sent "cspConf" message to $contract$-$\mathcal{B}_{DEDU}$ with the same $tag$ \\
				& send ("usrConf", $tag$, $reqNum$) to $contract$-$\mathcal{B}_{DEDU}$ \\
\textbf{Refund:} & Upon receiving ("refund", $tag$, $reqNum$) from environment $\mathcal{E}$ \\
			& send ("refund", $tag$, $reqNum$) to $contract$-$\mathcal{B}_{DEDU}$ \\ 		
\textbf{DeLink:} & Upon receiving ("deLink", $tag$, $reqNum$) from environment $\mathcal{E}$ \\
			& send ("deLink", $tag$, $reqNum$) to $contract$-$\mathcal{B}_{DEDU}$ \\
			& send ("deLink", $tag$, $reqNum$, $L$) to $c$ \\
	\hline
	\end{tabular}
\caption{Client-side programs for $contract$-$\mathcal{B}_{DEDU}$}
\label{fig:naiveProto}
\end{figure}
\subsubsection{Analysis of $\mathcal{B}_{DEDU}$}
In $\mathcal{B}_{DEDU}$, the parties agree on different timing parameters $\tau_{p}<\tau_{c1}<\tau_{c2}$. These timing parameters are required to enforce timely computation and also to avoid locking of the funds if one of the party refuses to move forward in the protocol. The contract $\mathcal{B}_{DEDU}$ can always query the underlying Blockchain for current time\footnote{most smart contracts use block number or block timestamp as a timer.} which is different from the real-world timer. 
Clause \ref{clause4} ensures the user that the fee is calculated according to the deduplication rate. Clause \ref{clause5} ensures the user that if he sends the fee before $\tau_{p}$, then his deposit is refunded; otherwise, his deposit is forfeited. According to clause \ref{clause6}, in order to get the fee, the CSP has to send an acknowledgement before $\tau_{c1}$. Otherwise, it is deemed as either the CSP do not want to store the user's data, or the user has not sent the data or correct proof-of-ownership. According to clause \ref{clause7} in order to store his data at cloud and get access rights, the user has to send the acknowledgement of the receipt of file link before $\tau_{c2}$. Otherwise, it is deemed as either the user has not received the file link or the user is acting maliciously after receiving the file link. In this case, the CSP does not store the file or do not provide the access rights to the user. Clause \ref{clause7} also guarantees that if $u$ is not the first uploader of a data $d$, then the $SF_{u}^{c}(t)$ part of the fee paid by $u$ is equally distributed among all the users who hold the file link of $d$. This distribution of storage fee among users is unique to our scheme, and it guarantees that all users pay the same storage fee irrespective of when they submit their request. A formal smart contract encoding the clauses discussed above is presented in Figure \ref{fig:naiveContract} and a protocol to interact with the smart contract is presented in Figure \ref{fig:naiveProto}.

A cloud service provider $c$ initializes the parameters $SF$ and $EF$ by invoking Create functionality. $c$ sets these parameters according to current storage costs and utility. If the storage costs vary in future, he can change $SF$ and $EF$ values according to new storage costs. The parameter $interval$ is required  to compute timing parameters $\tau_{p}$, $\tau_{c1}$ and $\tau_{c2}$. These timing parameters are required for timely computation of protocol and avoiding indefinite locking of funds in the contract. A user $u$ sends his storage request to $\mathcal{BC}$ invoking Request functionality. His request consists of parameters like $tag$, $|d|$ and $\$d$. $tag$ is computed using a convergent encryption algorithm, and $|d|$ is the length of the file requested for storage and $\$d$ is a safety deposit. The Request functionality computes the pay in two ways. If the $tag$ sent by $u$ is already exists and is in active state at cloud maintained by $c$, then $pay$ is computed according to the current discounted storage fee of file $F$ with tag $tag$. Otherwise, the pay is computed as $pay = SF * |d| + EF$. Depending on the behavior of the user, there are two cases as follows:
\begin{myEnumerate}[label={}, leftmargin = 0pt]
	\item \textbf{Case 1}: $u$ has sent the storage fee to $\mathcal{BC}$ invoking Pay functionality. Depending on the behavior of the user, there are five cases as follows:
	\begin{myEnumerate}[label={}]
		\item \textbf{Case 1.1}: $u$ has sent the correct file $d$ \footnote{The correctness of the file is verified using the $tag$, based on which the payment is computed during execution of Request functionality.} to $c$. Depending on the behavior of $c$, there are two cases as follows:
		\begin{myEnumerate}[label={}]
			\item \textbf{Case 1.1.1}: $c$ has sent the confirmation message to $\mathcal{BC}$ invoking CSPConf functionality. Depending on the behavior of $c$, there are two cases as follows:
			\begin{myEnumerate}[label={}]
				\item \textbf{Case 1.1.1.1}: $c$ has sent the file link to $u$. Depending on the behavior of the $u$, there are two cases as follows:
				\begin{myEnumerate}[label={}]
					\item \textbf{Case 1.1.1.1.1}: $u$ has sent the confirmation message to $\mathcal{BC}$ invoking usrConf functionality. In this case, if file $d$ is not previously stored at $c$, then all the $pay$ is sent to $c$. Otherwise, the number of owners currently having a link to file $d$ is calculated, and $pay$ is divided into $rem$ and $DF$ components. $rem$ is distributed among the file owners equally, and $DF$ is sent to $c$. The new $pay$ to be paid by the next deduplication requester is computed and stored at contract storage.
					\item \textbf{Case 1.1.1.1.2}: $u$ has failed to send the confirmation message. This case occurs when $u$ has not received the file link from $c$ or $u$ intentionally / unintentionally fail to send a confirmation message. In this case, the $c$ can invalidate the link sent to $u$ and $u$ can send a transaction to $\mathcal{BC}$ invoking Refund functionality, which refunds $pay$ to $u$.
				\end{myEnumerate}
				\item \textbf{Case 1.1.1.2}:  $c$ has failed to send the file link to $u$. This case is similar to case 1.1.1.1.2 where $u$ can obtain a refund by invoking Refund functionality.
			\end{myEnumerate}
			\item \textbf{Case 1.1.2}: $c$ has failed to send the confirmation message to $\mathcal{BC}$. This case is similar to case 1.1.1.1.2 where $u$ can obtain a refund by invoking Refund functionality.
		\end{myEnumerate}
		\item \textbf{Case 1.2}: $u$ has sent the incorrect file to $c$. In this case, the $c$ discards the $u$'s request and will not send any further transactions. $u$ can obtain its $pay$ invoking Refund functionality.
		\item \textbf{Case 1.3}: $u$ has sent the correct proof-of-possession to $c$. In this case, $c$ adds $u$ to the list of owners of file $d$ and sends the file link to $u$. From now on this case proceeds similarly to case 1.1.1.1.
		\item \textbf{Case 1.4}: $u$ has sent the incorrect proof-of-possession to $c$. This case is similar to case 1.2.
		\item \textbf{Case 1.5}: $u$ neither sends file nor proof-of-possession to $c$.  This case is similar to case 1.2.
	\end{myEnumerate}
	\item \textbf{Case 2}: $u$ has failed to send the storage fee. In this case, $c$ sends a transaction to $\mathcal{BC}$ invoking Claim functionality to claim $u$'s deposit.
\end{myEnumerate}
\subsubsection{Proofs of $\mathcal{B}_{DEDU}$}
\begin{theorem}
Proposed scheme satisfies correctness
\end{theorem}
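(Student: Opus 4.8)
The plan is to prove correctness by showing that the storage fee computed by $\mathcal{B}_{DEDU}$ for any user $u$ depends solely on the three admissible factors---the file size $|d|$, the deduplication rate $n_d^c(t)$, and the CSP's pre-set pricing parameters $SF$ and $EF$---and that no party (CSP, user, or miner) can tamper with this computation. First I would recall from the \textbf{Request} functionality that when $u$ submits a tag, the contract deterministically sets either $pay := SF * |d| + EF$ (first uploader, or no active request for that tag) or $pay := uTAB[tag].cPay$ (a duplicate with at least one active holder). I would then argue that $uTAB[tag].cPay$ is itself a deterministic function of the same admissible quantities: it is updated in \textbf{usrConf} to $\frac{SF}{activeRequests+2}+EF$, where $activeRequests$ is exactly the count of owners currently holding an active link, i.e.\ a faithful realization of the deduplication rate $n_d^c(t)$. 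Hence in every branch the fee reduces to an expression in $|d|$, $n_d^c(t)$, $SF$, and $EF$, matching the Correctness definition in Section~\ref{def:pb}.

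Next I would address the \emph{no-influence} half of the claim, treating each potential adversary in turn. For the \textbf{CSP}, the parameters $SF$ and $EF$ are fixed once at \textbf{Create} time and are visible on the public ledger; any later attempt to alter a specific user's fee would require modifying contract code or state outside the sanctioned \textbf{Create} call, which the Blockchain's immutability and correct-execution guarantees forbid. For the \textbf{user}, the only user-supplied inputs to the fee computation are $tag$ and $|d|$; I would note that $tag \leftarrow TG_{CE}(C)$ binds to the encrypted file, and that a user cannot understate $|d|$ because the CSP verifies the uploaded data (or the PoP) against the committed $tag$ before issuing \textbf{cspConf}, so a mismatched size is caught. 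For the \textbf{miners}, correctness of execution follows directly from the standard Blockchain model assumed in the threat model (Section on threats) and the smart-contract semantics in the Preliminaries: miners faithfully execute and cross-verify every transaction, so they cannot inject a fraudulent $pay$ value.

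I expect the main obstacle to be precisely pinning down that $activeRequests$ (and hence $uTAB[tag].cPay$) always equals the true deduplication rate across the interleaving of \textbf{Request}, \textbf{usrConf}, \textbf{Refund}, and \textbf{DeLink} calls. The subtlety is that $n_d^c(t)$ counts owners with currently active links, while the contract increments/decrements state through several functionalities: \textbf{usrConf} sets a request to $active$, \textbf{DeLink} moves it to $inActive$, and aborted requests never reach $active$. I would therefore prove, by induction over the sequence of confirmed transactions for a fixed $tag$, that the set of requests in state $active$ coincides exactly with the set of users holding a valid file link at every point in time, so that the loop counting $activeRequests$ in \textbf{usrConf} returns $n_d^c(t)-1$ and the stored $cPay$ correctly encodes the fee for the next requester. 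The remaining branches (the $activeRequests=0$ first-uploader case and the refund/claim paths, which do not recompute $pay$) follow as routine checks once this invariant is established.
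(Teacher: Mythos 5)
Your proposal is correct in substance but takes a genuinely different route from the paper. The paper's proof spends almost all of its effort on the consensus layer: it walks through concrete attack scenarios in which the user or the CSP assumes the role of a miner and pre-mines blocks $b_{1}^{'}, b_{2}^{'}$ containing request/pay transactions with a falsified fee $f^{'}$, and argues that honest miners re-execute and verify every transaction, reject the inconsistent blocks, and that subverting this requires controlling more than $50\%$ of the hash rate. It treats the contract-level fee formula (clause \ref{clause4}) as correct by construction and does not examine how $cPay$ evolves across \textbf{Request}, \textbf{usrConf}, \textbf{Refund}, and \textbf{DeLink}. You invert the emphasis: your main technical content is the invariant that the set of requests in state $active$ coincides with the set of users holding valid links, so that $activeRequests$ faithfully realizes $n_{d}^{c}(t)$ and every branch of the fee computation reduces to the three admissible factors of Section \ref{def:pb}. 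That invariant is a real contribution the paper skips, and it is arguably needed for the theorem as stated. Conversely, your treatment of the miner threat is thinner than the paper's and slightly circular: the threat model explicitly allows a malicious CSP or user to \emph{try} to influence miners to include fraudulent transactions, so you cannot simply cite ``the standard Blockchain model'' for execution correctness --- that is precisely the point the paper argues via block re-verification and the majority-hash-rate assumption. To match the paper's intent you should either import that 51\%-style argument or state explicitly that honest-majority mining is being assumed as a premise. With that one adjustment, your proof covers everything the paper's does and more.
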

\begin{proof}
	Let $contract$-$\mathcal{B}_{DEDU}$ is deployed on a Nakamoto-style Blockchain network using Proof-of-Work as a consensus algorithm. Let $b$ be the current block of the blockchain which is extended by blocks $b_{1}$ and $b_{2}$. Let $tx_{request}$ and $tx_{pay}$ be the request and pay transactions initiated by a user $u$ which are eventually embedded in $b_{1}$ and $b_{2}$ respectively. We will prove the correctness by considering the following cases. \par
	Case 1: A user $u$ influences the execution of the request functionality to decrease the fee $f$ to be paid by him. This can happen if $u$ assumes the role of a miner and pre-mines two blocks $b_{1}^{'}$ and $b_{2}^{'}$ that contains the $tx_{request}^{'}$ and $tx_{pay}^{'}$ transactions respectively with a modified fee $f^{'}$. Remember in this case $u$ does not broadcast $tx_{request}^{'}$ to the network. Now $u$ broadcast $b_{1}^{'}$ extending $b$ and $b_{2}^{'}$ extending $b_{1}^{'}$ to the Blockchain network. All the other miners in the Blockchain network verifies every transaction in $b_{1}^{'}$ and extends $b_{1}^{'}$ if and only if all the outputs of every transaction in $b_{1}^{'}$ are correct. As the output of $tx_{request}$ is $f$ but not $f^{'}$ the miners discard the block $b_{1}^{'}$. As $b_{2}^{'}$ is extending the wrong block $b_{1}^{'}$, it is also discarded. Miners with at least 51\% of hash-rate cumulatively required to guarantee the correctness of the transactions in the block. \par
	Case 2: $u$ broadcasts $tx_{request}$ but pre-mines $b_{2}^{'}$ that contains $tx_{pay}^{'}$ transaction with modified fee $f^{'}$. In this case, as the actual fee $f$ to be paid by $u$ is already stored in the contract storage, while verifying the $tx_{pay}^{'}$ in $b_{2}^{'}$, the $f^{'}$ received through $tx_{pay}^{'}$ is compared against the stored value. The comparison will fail and the block $b_{2}^{'}$ is discarded. \par
	Case 3: A cloud storage provider $c$ influences the execution of request functionality to increase the fee to be paid by $u$. This case is similar to Case 1 except that now $c$ assumes the role of a miner and broadcasts $b_{1}^{'}$ consisting the modified fee. \par
	Similarly, all the transactions with $\mathcal{B}_{DEDU}$ are executed correctly; otherwise, those transactions are rejected. Nevertheless,  $c$ or $u$ can influence some of the miners to include the wrong blocks to their local ledger and generate new blocks extending these wrong blocks. However, $c$ or $u$ should accumulate more than 50\% of hash rate to make the entire network to accept wrong blocks which is a difficult task unless they have large mining pools under their control \cite{kroll2013economics}. 
	
	In summary, as the miners in public Blockchain networks are reasonably assumed to act honestly for the common good and follow the rules encoded in consensus algorithm, it is difficult for $c$ or $u$ to make the network to accept wrong blocks. Considering the above cases, our scheme satisfies correctness as defined in Section \ref{def:pb}.
\end{proof}
\begin{theorem}\label{th:fairpayments}
Proposed scheme satisfies fair payments
\end{theorem}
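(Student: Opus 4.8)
The plan is to exploit the escrow structure of $\mathcal{B}_{DEDU}$. First I would observe that the fee $\$paid$ deposited by a user can reach the CSP's balance $ledger[c]$ through exactly one code path, namely the \textbf{usrConf} functionality: the \textbf{Claim} functionality credits the CSP with the safety deposit $\$d$ only (never the fee), while \textbf{Refund} returns $\$paid$ to the user. Consequently, proving the biconditional required by \emph{fair payments} reduces to showing that a successful \textbf{usrConf} transaction is logically equivalent to an honest user holding a valid file link for $d$. Because each functionality of $\mathcal{B}_{DEDU}$ is executed atomically by the miners under the standard Blockchain model, there is no intermediate global state in which one party holds value while the other does not, so it suffices to reason about the terminal state of each request as laid out in the case analysis accompanying Figure~\ref{fig:naiveContract}.

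For the forward direction ($\Rightarrow$), I would assume an honest CSP receives the fee. Since this happens only inside \textbf{usrConf}, the request must have been in the state \emph{waitForCliConf} when the transaction was accepted, and that state is reachable only through a prior \textbf{CSPConf} call (itself preceded by \textbf{Pay} and \textbf{Request}). The \textbf{usrConf} transaction is issued by the user, and by the client program an honest user emits it only after asserting that $L$ is a correct link to $d$ and that the CSP has posted its \textbf{CSPConf} message. Hence, whenever the fee reaches an honest CSP via \textbf{usrConf} issued by an honest user, that user has already verified and retained a valid file link, as mandated by clause~\ref{clause7}.

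For the backward direction ($\Leftarrow$), together with robustness against aborts, I would sweep the branches of the case analysis. If an honest user has received a valid link, then—being honest and still within $\tau_{c2}$—it calls \textbf{usrConf}, which credits $ledger[c]$ (all of $\$paid$ when $u$ is the first uploader, otherwise the $\$DF$ share while the remainder $\$rem$ is split among the existing owners), so the honest CSP is paid. I would then confirm that no party profits by aborting: if the CSP withholds \textbf{CSPConf} or the link, the request stays in \emph{waitForCSPConf}/\emph{waitForCliConf} and the user reclaims $\$paid$ through \textbf{Refund} after the deadline, so the CSP gains nothing (clause~\ref{clause6}); if the user aborts after paying but before delivering the file or $PoP$, the CSP never confirms and the user is refunded while the CSP stores nothing; and if the user never pays, the CSP recovers $\$d$ via \textbf{Claim} (clause~\ref{clause5}).

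The hard part will be the branch where an honest CSP has already released the link to a user who then aborts before \textbf{usrConf} (Case 1.1.1.1.2): the escrow has not yet paid the CSP, so naively the user appears to hold a link for free. To keep the biconditional intact I would interpret ``receives the file link'' as durable access rights and rely on the revocation step—the CSP invalidates $L$ while the user recovers $\$paid$ through \textbf{Refund}—which returns the request to a state where neither party holds value. Making this precise means treating link receipt as revocable until \textbf{usrConf} finalizes the request to \emph{active}, so that the equivalence ``fee credited to the CSP $\Leftrightarrow$ honest user retains a valid link'' holds at the terminal state. The atomicity of \textbf{usrConf}, which credits the ledgers and sets $rState$ to \emph{active} in a single execution, is exactly what precludes any exploitable gap between payment and access, and it is this atomicity (inherited from the correctness guarantee of the underlying consensus) that I expect to do the real work in closing the argument.
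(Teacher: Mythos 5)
Your proof is correct and matches the paper's argument in substance: the paper proves fairness by the same case analysis over aborting/malicious behaviors (user aborts before paying, user withholds the data or sends an incorrect $PoP$, CSP withholds the confirmation, CSP withholds the link, user withholds \textbf{usrConf}), resolving each via the corresponding contract clause exactly as you do, including the revocation of the file link in the final case where the user refuses to confirm. Your additional framing---that the fee can reach $ledger[c]$ only through the \textbf{usrConf} path, so fairness reduces to a biconditional about that single code path plus abort robustness---is a cleaner organization than the paper's flat five-case enumeration, but it does not change the underlying argument.
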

\begin{proof}
	We prove fairness by considering the following cases. \par
	Case 1: $u$ is malicious and aborts after learning the fee he needs to pay. In this case, according to $\mathcal{B}_{DEDU}$ contract clause \ref{clause5}, $u$ forfeits his deposit, and his data is not stored in the cloud. Here, the data of the $u$ cannot be stored in the cloud unless he pays the fee. Thus the fairness holds\par    
	Case 2: $u$ fails to send the data $d$ or sends incorrect $PoP$ to $c$. In this case, according to $\mathcal{B}_{DEDU}$ contract clause \ref{clause6}, $c$ refuses to acknowledge the receipt of $d$/$PoP$ and the fee paid by $u$ is refunded. Here, the $u$ cannot obtain the storage link if he fails to send the $d$ or the correct $PoP$. Thus the fairness holds.\par
	Case 3: $c$ is malicious and do not acknowledge the receipt of $d$ or correct $PoP$. This case similar to Case 2. The fee paid by $u$ is refunded. Here $c$ cannot obtain the fee without acknowledging the receipt of $d$ or correct $PoP$. Thus the fairness holds.\par
	Case 4: $c$ is malicious and do not send the file link to $u$. In this case, according to $\mathcal{B}_{DEDU}$ contract clause \ref{clause7} $u$ will not acknowledge the receipt of file link. Then the fee paid by $u$ is refunded. Here, $c$ cannot obtain the fee without sending the file link to $u$. Thus the fairness holds. \par
	Case 5: $u$ is malicious and do not acknowledge the receipt of the file link. This case is similar to case 4. The fee paid by $u$ is refunded. If $c$ do not receive the fee, it disables the file link sent to $u$. Here, $u$ cannot store his data without acknowledging the receipt of file link. Thus the fairness holds.\par
	In summary, considering the above cases, our scheme holds fairness.
\end{proof}
\begin{theorem}
	Proposed scheme satisfies uniform payments
\end{theorem}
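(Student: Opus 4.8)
The plan is to prove uniform payments by induction on the number of users who have successfully confirmed receipt of the file link for a fixed data file $d$ at a fixed CSP $c$, tracking the \emph{net} amount each such user has paid, i.e. its gross payment minus every redistribution credit its ledger entry later receives. Writing $S := SF\cdot|d|$ for the full storage-fee component and $E := EF$ for the extra fee (reading the storage-fee term of the $cPay$ update consistently with the $SF*|d|$ quoted in \textbf{Request}), I would establish the loop invariant: immediately after the $m$-th user invokes \textbf{usrConf} and is set to \emph{active}, every one of the $m$ currently active owners of $d$ has net payment exactly $\frac{S}{m}+E$, and the stored value $uTAB[tag].cPay$ equals $\frac{S}{m+1}+E$.

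For the base case ($m=1$) the first uploader pays $pay_1 = S+E$ from the fresh-$tag$ branch of \textbf{Request}; since $activeRequests=0$ the whole amount is credited to $c$, and because the $cPay$ assignment sits outside the if/else and therefore runs unconditionally, $cPay$ is reset to $\frac{SF}{0+2}+EF = \frac{S}{2}+E$, matching the invariant. For the inductive step I would invoke the hypothesis $cPay=\frac{S}{m+1}+E$, so the $(m+1)$-th requester is quoted and pays exactly $pay_{m+1}=\frac{S}{m+1}+E$. At its \textbf{usrConf} call $activeRequests=m$; the slice $\$DF = E$ is sent to $c$, and the remainder $\$rem = pay_{m+1}-E = \frac{S}{m+1}$ is split equally, crediting each of the $m$ incumbents with $\frac{\$rem}{m}=\frac{S}{m(m+1)}$. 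The crux is the telescoping identity $\left(\frac{S}{m}+E\right)-\frac{S}{m(m+1)} = \frac{S}{m+1}+E$, which drives every incumbent's net payment down to the exact value the newcomer paid, while the unconditional update sets $cPay=\frac{SF}{m+2}+EF=\frac{S}{m+2}+E$, re-establishing the invariant for $m+1$ owners.

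From the invariant the theorem is immediate: at any snapshot with $n$ active owners of $d$, each has net payment $\frac{S}{n}+E = \frac{SF\cdot|d|}{n}+EF$, a quantity that depends only on $n$, on $|d|$, and on the CSP-set prices $SF,EF$, and in particular is independent of the order in which any given user's request arrived. I would present the clean regime in which every requester who pays goes on to confirm and no owner delinks, which is precisely the order-independence scenario the definition targets.

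The step I expect to be the main obstacle is the bookkeeping that turns the informal phrase ``$activeRequests$ equals the number of earlier owners'' into what the contract actually computes: the counting loop ranges over indices $[0,\,numReq-2]$ and increments only for requests in state $active$, so I must argue that in the clean regime the $(m+1)$-th confirmer sees exactly the $m$ previously activated requests and no spurious ones, making both the divisor $activeRequests=m$ in the refund and the denominator $activeRequests+2=m+2$ in the $cPay$ update take their intended values. Once this indexing is pinned down, the remaining work is just the single algebraic identity above applied repeatedly.
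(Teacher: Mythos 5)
Your proof is correct and is essentially the paper's own argument: both rest on the observation that the $n$-th confirmer pays $\frac{SF\cdot|d|}{n}+EF$ while each of the $n-1$ incumbents receives a refund of $\frac{SF\cdot|d|}{n(n-1)}$, which telescopes every net payment down to the common value. You merely package the paper's explicit telescoping sums as an induction with a loop invariant on $cPay$ and the active-owner count, and you are more careful than the paper about the contract-level details (the unconditional $cPay$ update, the $activeRequests$ indexing, and the $SF$ versus $SF\cdot|d|$ notational slip), none of which changes the substance.
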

\begin{proof}
According to clause \ref{clause4}, the first uploader pays a fee of $SF_{u}^{c}(t)+EF_{u}^{c}(t)$. The second uploader pays a fee of $\frac{SF_{u}^{c}(t)}{2}+EF_{u}^{c}(t)$. Due to the second uploader, the first uploader gets a refund of $\frac{SF_{u}^{c}(t)}{2}$. So at the end each uploader pays a fee of $\frac{SF_{u}^{c}(t)}{2}+EF_{u}^{c}(t)$. 

The $n^{th}$ uploader pays a fee of $\frac{SF_{u}^{c}(t)}{n}+EF_{u}^{c}(t)$ and due to the $n^{th}$ uploader each of the previous $n-1$ uploaders receive a refund of $\frac{SF_{u}^{c}(t)}{n*(n-1)}$. At the end of $n^{th}$ user's payment, the first user pays $SF_{u}^{c}(t)+EF_{u}^{c}(t)-\frac{SF_{u}^{c}(t)}{2}-\frac{SF_{u}^{c}(t)}{6},...,-\frac{SF_{u}^{c}(t)}{n*(n-1)}=\frac{SF_{u}^{c}(t)}{n}+EF_{u}^{c}(t)$. The second user pays $\frac{SF_{u}^{c}(t)}{2}+EF_{u}^{c}(t)-\frac{SF_{u}^{c}(t)}{6},...,-\frac{SF_{u}^{c}(t)}{n*(n-1)}=\frac{SF_{u}^{c}(t)}{n}+EF_{u}^{c}(t)$. Similarly the $(n-1)^{th}$ user pays $\frac{SF_{u}^{c}(t)}{n-1}+EF_{u}^{c}(t)-\frac{SF_{u}^{c}(t)}{n*(n-1)}=\frac{SF_{u}^{c}(t)}{n}+EF_{u}^{c}(t)$.
So, every user having a data $d$ opting for dedup with $\mathcal{B}_{DEDU}$ at $c$ pays the same fee. 

In summary, the proposed scheme satisfies uniform payments irrespective of when a user submits his/her request. 
\end{proof}
%

\section{Proposed Inter-CSP deduplication scheme}
In the previous section, we proposed a dedup scheme with a single CSP whereas in this section, we propose a Blockchain-based inter-CSP deduplication scheme, which consists a root level smart contract $\mathcal{B}_{I\text{-}DEDU}$, a smart contract $\mathcal{B}_{DEDU}$ for each CSP, protocols to interact with $\mathcal{B}_{DEDU}$, and a public Blockchain network to deploy the smart contracts.
\subsection{Assumptions}
\begin{enumerate}
	\item We assume all the CSPs charges the same $SF_{u}^{c}(t)$ and $EF_{u}^{c}(t)$ values. 
	\item We assume that no CSP will collect extra fee other than $SF_{u}^{c}(t)$ or $EF_{u}^{c}(t)$ for inter CSP deduplication. Inter-CSP deduplication is same as single CSP deduplication from the cloud users perspective.
	\item A CSP $c_{1}$ pays a fee of $AF^{c}(t)$ to a CSP $c_{2}$ for accessing the data stored at $c_{2}$.
	\end{enumerate}
\subsection{$\mathcal{B}_{I\text{-DEDU}}$}
The inter-CSP deduplication is similar to single CSP deduplication except that now a root-level smart contract has the information about tags of data stored in different CSPs. The clauses in $\mathcal{B}_{I\text{-}DEDU}$ are as follows:
\begin{enumerate}[label={(\arabic*)}]
		\item An organization $\mathcal{O}$ (like a consortium of CSPs) creates a smart contract to facilitate inter CSP deduplication scheme. $\mathcal{O}$ designs and deploys the $\mathcal{B}_{I\text{-}DEDU}$ on a public Blockchain network and shares the smart contract address and ABI with all the registered CSPs\footnote{We assume a registration phase is executed before deploying the smart contract and only registered CSPs can exchange messages with $\mathcal{B}_{I\text{-}DEDU}$}.
		\item Each CSP $c_{i}$ also deploys a smart contract $\mathcal{B}_{DEDU}^{i}$ on a public Blockchain network and announces the smart contract address and smart contract ABI on a public platform (like a website/bulletin board) and also registers $\mathcal{B}_{DEDU}^{i}$ address with $\mathcal{B}_{I\text{-}DEDU}$.
		\item A user $u$ sends request to $\mathcal{B}_{DEDU}^{i}$ similar to clause (3) in Section \ref{contract:1}.
		\item $\mathcal{B}_{DEDU}^{i}$ performs checks similar to clause (4) in Section \ref{contract:1}. If the check is valid, it will compute the fee to be paid by $u$ according to the dedup rate and sends this information to $u$. Otherwise it will send a request containing $tag$ to $\mathcal{B}_{I\text{-}DEDU}$.
		\item $\mathcal{B}_{I\text{-}DEDU}$ checks whether any CSP is holding data with the same $tag$. If check is not valid, then it will return $N/A$ to $\mathcal{B}_{DEDU}^{i}$. Then $\mathcal{B}_{DEDU}$ will compute the fee as $SF_{u}^{c}(t)*|d|+ EF_{u}^{c}(t)$ and send it to $u$. Otherwise $\mathcal{B}_{I\text{-}DEDU}$ will send the information about the smart contract $\mathcal{B}_{DEDU}^{j}$ and CSP $c_{j}$ which is holding the data with $tag$.   
		\item If $\mathcal{B}_{DEDU}^{i}$ receives the info about $c_{j}$, then a request message is sent to $\mathcal{B}_{DEDU}^{j}$ from $\mathcal{B}_{DEDU}^{i}$. Then $\mathcal{B}_{DEDU}^{j}$ computes the fee according to the dedup rate and sends it to $\mathcal{B}_{DEDU}^{i}$, which will be then forward to user $u$ by $\mathcal{B}_{DEDU}^{i}$.
		\\ From here on, we assume that inter CSP deduplication is found and the contract proceeds as follows:
		\item If $\tau > \tau_{p}$ and $u$ has not sent the fee to $\mathcal{B}_{DEDU}^{i}$, then $\mathcal{B}_{DEDU}^{i}$ also won't send fee to $\mathcal{B}_{DEDU}^{j}$. $u$'s deposit is forfeited. This deposit is sent to $c_{j}$ and the request is marked as terminated by both $\mathcal{B}_{DEDU}^{j}$ and $\mathcal{B}_{DEDU}^{i}$.
		\item If $\tau > \tau_{c1}$ and $c_{j}$ has not sent the confirmation message to $\mathcal{B}_{DEDU}^{j}$, then the fee paid by $u$ is refunded and the request is marked as terminated by both $\mathcal{B}_{DEDU}^{j}$ and $\mathcal{B}_{DEDU}^{i}$.
		\item If $\tau > \tau_{c2}$ and $u$ has not sent the confirmation message to $\mathcal{B}_{DEDU}^{i}$, then $\mathcal{B}_{DEDU}^{i}$ won't send confirmation to $\mathcal{B}_{DEDU}^{j}$. The fee paid by $u$ is refunded. Else, a fee $AF^{c}(t)$ is sent to $c_{j}$ from $c_{i}$ and the request is marked as terminated by both $\mathcal{B}_{DEDU}^{j}$ and $\mathcal{B}_{DEDU}^{i}$. 
\end{enumerate}
The analysis of $\mathcal{B}_{I\text{-}DEDU}$ is similar to $\mathcal{B}_{DEDU}$ and all the properties satisfied by $\mathcal{B}_{DEDU}$ are also satisfied by $\mathcal{B}_{I\text{-}DEDU}$. The formal smart contract $contract$-$\mathcal{B}_{I\text{-}DEDU}$ is presented in Figure \ref{fig:contractInter}.
\begin{figure}
	\centering
	\begin{tabular}{|p{1.5cm}p{0.8\textwidth}|}
		\hline
		\multicolumn{2}{|c|}{$contract$-$\mathcal{B}_{I\text{-}DEDU}$} \\
		\textbf{Init:} & $list:=\{\}$, $tags:=\{\}$\\
		\textbf{Register:} & Upon receiving ("register", $\mathcal{B}_{DEDU}$, $info$) from a cloud storage provider $c$ \\
					 & assert $(c,*,*) \notin list$ \\
					 & set $list := list \cup (c,\mathcal{B}_{DEDU},info)$ \\
		\textbf{setTag:} & Upon receiving ("setTag", $tag$) from a contract $\mathcal{B}_{DEDU}$ \\
		& assert $tag \notin tags $ \\
		& set $CSP:= list[\mathcal{B}_{DEDU}].c$ \\
		& set $info:=list[\mathcal{B}_{DEDU}].info$ \\
		& set $tags:=tags \cup (tag, \mathcal{B}_{DEDU}, CSP, info)$ \\
		\textbf{getTag:} & Upon receiving ("getTag", $tag$) from a contract $\mathcal{B}_{DEDU}$ \\
					 & assert $(*,\mathcal{B}_{DEDU},*) \in list$ \\
					 & if $(tag,*,*,*) \in tags $ \\
					 &  $\:$ send ("tagFound", $tag$, $tags[tag].\mathcal{B}_{DEDU}$, $tags[tag].CSP$, $tags[tag].info$) to $\mathcal{B}_{DEDU}$\\
					 & else send ("N/A", $tag$) to $\mathcal{B}_{DEDU}$ \\
		\hline
	\end{tabular}
\caption{$contract$-$\mathcal{B}_{I\text{-}DEDU}$}
\label{fig:contractInter}
\end{figure}

\section{Implementation}
We have written the contracts in solidity 0.8.0 using truffle framework. We have used a 2.50 GHz intel core i5 processor and a 4 GB RAM machine to run our experiments. We have deployed the proposed contract in a private Ethereum network (https://www.trufflesuite.com/ganache) which mimic public Ethereum network. However, our goal of this implementation is to deploy the contract in public Blockchain networks in real scenarios.
\subsection{Implementation of $\mathcal{B}_{DEDU}$}
We have tested the proposed smart contract $\mathcal{B}_{DEDU}$ multiple times, and each transaction's transactional cost and its equivalent financial cost is shown in Table \ref{tab:costs}. Observe that the contract deployment transaction consumes a large amount of gas; however, this is a one time cost for CSP.  Next, the create and request functionalities also consumes a large amount of gas due to the modification of contract storage variables. Storing data in a contract is an expensive operation in Ethereum. As the usrConf function executes main tasks like computing dedup rate, sending the storage fee to CSP, and sending refunds to users, the transaction to usrConf also consumes a large amount of gas. The gas consumption of usrConf varies and depends on the number of users opted for deduplication before the transaction initiator's call. We listed the gas consumption of usrConf function in Figure \ref{fig:graphusrConf}. Observe that the gas consumption increases with the increase in dedup rate, and it reaches more than the block gas limit.
\begin{table}
	\centering
	\begin{tabular}{ccc}
		\hline
		\textbf{Function} & \textbf{Caller} & \textbf{Gas cost}   \\ \hline
		Init (contract deployment) & CSP & 187467  \\
		Create & CSP & 143464 \\
		Request & User & 161168 \\
		Pay & User & 66558  \\
		CSPConf & CSP & 31457 \\
		Refund & User & 31779 \\
		Claim & CSP & 31549\\
		DeLink & User & 29318 \\
		\hline
	\end{tabular}
	\caption{Costs of interacting with $\mathcal{B}_{DEDU}$ contract.}
\label{tab:costs}
\end{table}
\begin{figure}[!h]
	\centering
	\begin{tikzpicture}
	\begin{axis}[
	xlabel = Number of users,
	ylabel = Gas consumption
	]
	\addplot coordinates {(1,191146)(50,677626)(100,1285726)(150,1893826)(200,2501926)(250,3110026)(300,3718126)(350,4326226)(400,4934326)(450,5542426)(500,6150526)};
	\end{axis}
	\end{tikzpicture}
	\caption{Costs of interacting with usrConf functionality}
	\label{fig:graphusrConf}
\end{figure}
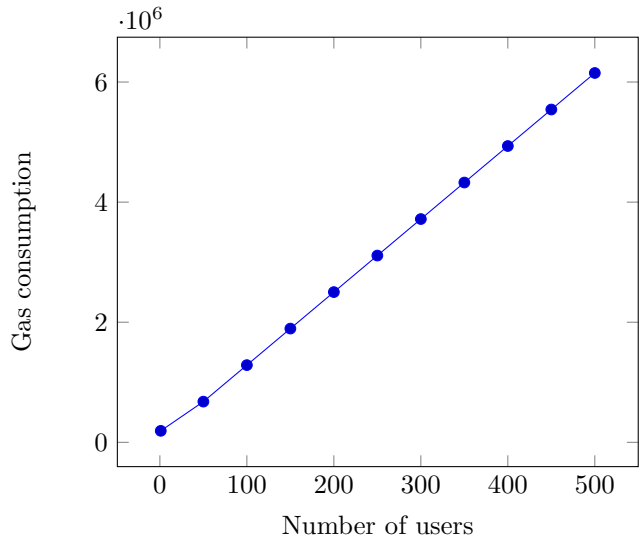

\subsection{Experiment 1: Finding utility of the users and the CSP by varying $n_{d}^{c}(t)$ and $EF_{u}^{c}(t)$}
We have conducted experiments with the values shown in Table \ref{tab:parametersettings} adopted from \cite{gao2016game,liang2019game}. The utility of the users and the CSP are shown in Figure \ref{fig:graphsUserUtility} and Figure \ref{fig:graphsCSPUtility} respectively. We varied $EF_{u}^{c}(t)$ from 10\% to 50\% of $SF_{u}^{c}(t)$ in both the figures. We also varied the deduplication rate $n_{d}^{c}(t)$ as 10\%, 50\%, 90\%, 100\% of $N_{d}^{c}(t)$ . Observe that both the users and CSP obtains non-negative utilities when opted for dedup with $\mathcal{B}_{DEDU}$. This shows that our proposed scheme is individually rational.

 In Figure \ref{fig:graphsUserUtility}, the utilities of the users decreases as the $EF_{u}^{c}(t)$ increases. 
 However, for any constant $EF_{u}^{c}(t)$ value the average utility of users increases with increase in dedup rate. This result agrees with the general notion of increase in dedup rate increases the average utility of users. Figure \ref{fig:graphsUserUtility} also shows that the user is incentive-compatible that is $U_{u}^{1}(t)>U_{u}^{0}(t)$ 
 $\forall n_{d}^{c}(t) >1$. The results show that the proposed scheme is not incentive compatible for $n_{d}^{c}(t)=1$, due to the new parameter $EF_{u}^{c}(t)$. To make the user incentive compatible when $n_{d}^{c}(t)=1$, the $EF_{u}^{c}(t)$ value should be set to zero.
 
 Figure \ref{fig:graphsCSPUtility} shows that the CSP is not incentive compatible until $EF_{u}^{c}(t) \ge 36\%$ of $SF_{u}^{c}(t)$. This result is in line with \eqref{eq:minCF} when the values in Table \ref{tab:parametersettings} are substituted.
\begin{table}[!h]
	\centering
	\begin{tabular}{c|c}
		\hline
		\textbf{Parameter} & \textbf{Value in Ether}  \\ \hline \hline
		$P_{u}(t)$    &    2.165     \\
		$SF_{u}^{c}(t)$ & 0.165  \\
		$SC_{c}^{u}(t)$&  0.1  \\
		$AF_{c}(t)$ & 0.1\\
		\hline
	\end{tabular}
	\caption{Experiment Settings.}
	\label{tab:parametersettings}
\end{table}
\pgfplotsset{width=6cm,compat=1.3}
\begin{figure}[!h]
	\begin{center}
		\begin{tikzpicture}
		\begin{axis}[
		legend entries = {$U_{u}^{0}(t)$, $U_{u}^{1}(t)(n_{d}^{c}(t)=10\%)$, $U_{u}^{1}(t)(n_{d}^{c}(t)=50\%)$, $U_{u}^{1}(t)(n_{d}^{c}(t)=90\%)$, $U_{u}^{1}(t)(n_{d}^{c}(t)=100\%)$ },
		legend to name = named,
		legend columns = 3,
		xlabel = Number of users,
		ylabel = \footnotesize{Average utility of users},
		title = {$EF_{u}^{c}(t)= 10\% \text{ of } SF_{u}^{c}(t)$},
		]		
		\addplot coordinates {(10,2)(20,2)(30,2)(40,2)(50,2) (60,2)(70,2)(80,2)(90,2)(100,2)};
		\addplot coordinates {(10,1.999)(20,2.007)(30,2.009666667)(40,2.011)(50,2.0118)
			(60,2.012166667)(70,2.012571429)(80,2.012875)(90,2.013111111)(100,2.0133)};
		\addplot coordinates {(10,2.059)(20,2.0665)(30,2.069333333)(40,2.0705) (50,2.0714) (60,2.071833333) (70,2.072285714)(80,2.0725)(90,2.072777778)(100,2.0729)};
		\addplot coordinates {(10,2.118)(20,2.126)(30,2.128666667)(40,2.13)(50,2.131) (60,2.1315) (70,2.131857143)(80,2.132125)(90,2.132333333)(100,2.1325)	};
		\addplot coordinates {(10,2.133)(20,2.141)(30,2.143666667)(40,2.145)(50,2.1458) (60,2.146333333) (70,2.146714286)(80,2.147)(90,2.147222222)(100,2.1474)	};
		\end{axis}
		\end{tikzpicture}
		\begin{tikzpicture}
		\begin{axis}[
		xlabel = Number of users,
		ylabel = \footnotesize{Average utility of users},
		title = {$EF_{u}^{c}(t)= 20\% \text{ of } SF_{u}^{c}(t)$},
		]		
		\addplot coordinates {(10,2)(20,2)(30,2)(40,2)(50,2) (60,2)(70,2)(80,2)(90,2)(100,2)};
		\addplot coordinates {(10,1.997)(20,2.005)(30,2.008)(40,2.00925)(50,2.01)(60,2.0105) (70,2.01085)(80,2.01125)(90,2.01144)(100,2.0116)	};
		\addplot coordinates {(10,2.05)(20,	2.058)(30,2.060666667)(40,2.062)(50,2.0628) (60,2.063333333)(70,2.063714286)(80,2.064)(90,2.064222222)(100,2.0644)};
		\addplot coordinates {(10,2.103)(20,2.111)(30,2.113333333)(40,2.11475)(50,2.1156) (60,2.116166667)(70,2.116571429)(80,2.11675)(90,2.117)(100,2.1172)};
		\addplot coordinates {(10,2.116)(20,2.124)(30,2.126666667)(40,2.128)(50,2.1288) (60,2.129333333) (70,2.129714286)(80,2.13)(90,2.130222222)(100,2.1304)};
		\end{axis}
		\end{tikzpicture}
		\begin{tikzpicture}
		\begin{axis}[
		xlabel = Number of users,
		ylabel = \footnotesize{Average utility of users},
		title = {$EF_{u}^{c}(t)= 30\% \text{ of } SF_{u}^{c}(t)$},
		]		
		\addplot coordinates {(10,2)(20,2)(30,2)(40,2)(50,2) (60,2)(70,2)(80,2)(90,2)(100,2)};
		\addplot coordinates {(10,1.996)(20,2.0035)(30,2.006333333)(40,2.0075)(50,2.0084) (60,2.009)(70,2.009285714)(80,2.009625)(90,2.009777778)(100,2.01)};
		\addplot coordinates {(10,2.042)(20,2.05)(30,2.052666667)(40,2.054)(50,2.0548) (60,2.055333333)(70,2.055714286)(80,2.056)(90,2.056222222)(100,2.0564)};
		\addplot coordinates {(10,2.088)(20,2.0965)(30,2.099)(40,2.1005)(50,2.1012) (60,2.101666667)(70,2.102142857)(80,2.102375)(90,2.102666667)(100,2.1028)};
		\addplot coordinates {(10,2.1)(20,2.108)(30,2.110666667)(40,2.112)(50,2.1128) (60,2.113333333)(70,2.113714286)(80,2.114)(90,2.114222222)(100,2.1144)};
		\end{axis}
		\end{tikzpicture}
		\begin{tikzpicture}
		\begin{axis}[
		xlabel = Number of users,
		ylabel = \footnotesize{Average utility of users},
		title = {$EF_{u}^{c}(t)= 40\% \text{ of } SF_{u}^{c}(t)$},
		]	
		\addplot coordinates {(10,2)(20,2)(30,2)(40,2)(50,2) (60,2)(70,2)(80,2)(90,2)(100,2)};
		\addplot coordinates {(10,1.994)(20,2.002)(30,2.004666667)(40,2.006)(50,2.0068) (60,2.007166667)(70,2.007571429)(80,2.007875)(90,2.008111111)(100,2.0083)};
		\addplot coordinates {(10,2.034)(20,2.0415)(30,2.044333333)(40,2.0455)(50,2.0464)  (60,2.046833333)(70,2.047285714)(80,2.0475)(90,	2.047777778)(100,2.0479)};
		\addplot coordinates {(10,2.073)(20,2.081)(30,2.083666667)(40,2.085)(50,2.086) (60,2.0865)(70,2.086857143) (80,2.087125)(90,2.087333333)(100,2.0875)};
		\addplot coordinates {(10,2.083)(20,2.091)(30,2.093666667)(40,2.095)(50,2.0958) (60,2.096333333)(70,2.096714286)(80,2.097)(90,2.097222222)(100,2.0974)};	
		\end{axis}
		\end{tikzpicture}
		\begin{tikzpicture}
		\begin{axis}[
		xlabel = Number of users,
		ylabel = \footnotesize{Average utility of users},
		title = {$EF_{u}^{c}(t)= 50\% \text{ of } SF_{u}^{c}(t)$},
		]	
		\addplot coordinates {(10,2)(20,2)(30,2)(40,2)(50,2) (60,2)(70,2)(80,2)(90,2)(100,2)};
		\addplot coordinates {(10,1.992)(20,2.0005)(30,2.003)(40,2.00425)(50,2.0052) (60,2.005666667)(70,2.006)(80,2.00625)(90,2.006555556)(100,2.0067)};
		\addplot coordinates {(10,2.026)(20,2.0335)(30,2.036333333)(40,2.0375)(50,2.0384) (60,2.038833333)(70,2.039285714)(80,2.0395)(90,2.039777778)(100,2.0399)	};
		\addplot coordinates {(10,2.059)(20,2.0665)(30,2.069333333)(40,2.07075)(50,2.0716) (60,2.072)(70,2.072428571)(80,2.07275)(90,2.072888889)(100,2.0731)};
		\addplot coordinates {(10,2.067)(20,2.075)(30,2.077666667)(40,2.079)(50,2.0798) (60,2.080333333)(70,2.080714286)(80,2.081)(90,2.081222222)(100,2.0814)};	
		\end{axis}
		\end{tikzpicture}
		\\ \ref{named}
	\end{center}
	\caption{The effect of $EF_{u}^{c}(t)$ and $n_{d}^{c}(t)$ on average utility of the cloud users}
	\label{fig:graphsUserUtility}
\end{figure}
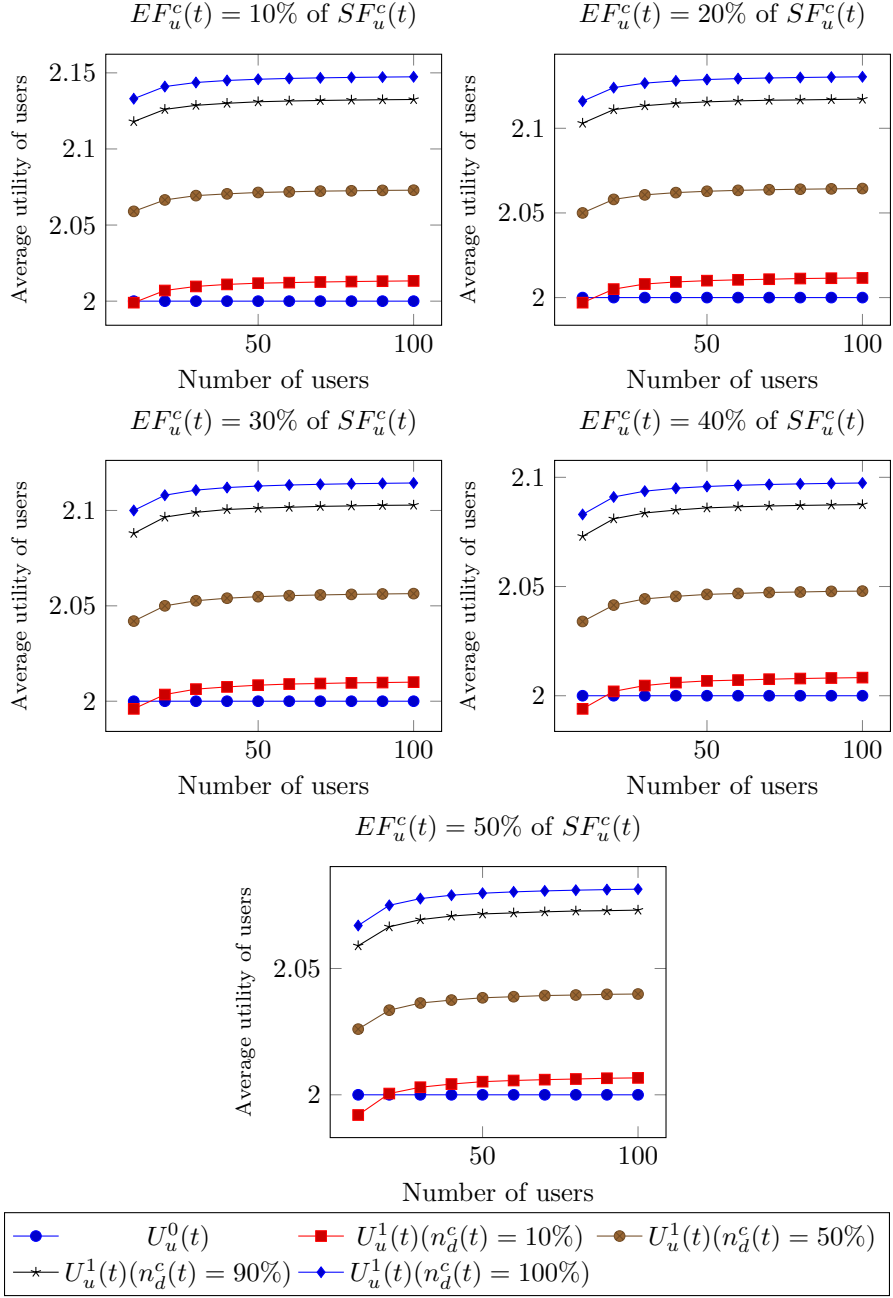

\begin{figure}[!ht]
	\begin{center}
		\begin{tikzpicture}
		\begin{axis}[
		legend entries = {$U_{c}^{0}(t)$, $U_{c}^{1}(t)(n_{d}^{c}(t)=10\%)$, $U_{c}^{1}(t)(n_{d}^{c}(t)=50\%)$, $U_{c}^{1}(t)(n_{d}^{c}(t)=90\%)$, $U_{c}^{1}(t)(n_{d}^{c}(t)=100\%)$ },
		legend to name = named,
		legend columns = 3,
		xlabel = Number of users,
		ylabel = Utility of CSP,
		title = {$EF_{u}^{c}(t)= 10\% \text{ of } SF_{u}^{c}(t)$}
		]
		\addplot coordinates {(10,0.64)(20,1.29)(30,1.94)(40,2.59)(50,3.24)(60,3.89)(70,4.54)(80,5.19)(90,5.84)(100,6.49)};
		
		\addplot coordinates {(10,0.66)(20,1.26)(30,1.86)(40,2.46)(50,3.06)(60,3.67)(70,4.27)(80,4.87)(90,5.47)(100,6.07)};
		
		\addplot coordinates {(10,0.46)(20,0.87)(30,1.27)(40,1.68)(50,2.08)(60,2.49)(70,2.89)(80,3.3)(90,3.7)(100,4.11)};
		
		\addplot coordinates {(10,0.27)(20,0.48)(30,0.69)(40,0.9)(50,1.1)(60,1.31)(70,1.52)(80,1.73)(90,1.94)(100,2.15)};
		
		\addplot coordinates {(10,0.22)(20,0.38)(30,0.54)(40,0.7)(50,0.86)(60,1.02)(70,1.18)(80,1.34)(90,1.5)(100,1.66)};
		\end{axis}
		\end{tikzpicture}
		\begin{tikzpicture}
		\begin{axis}[
		xlabel = Number of users,
		ylabel = Utility of CSP,
		title = {$EF_{u}^{c}(t)= 20\% \text{ of } SF_{u}^{c}(t)$},
		]
		\addplot coordinates {(10,0.64)(20,1.29)(30,1.94)(40,2.59)(50,3.24)(60,3.89)(70,4.54)(80,5.19)(90,5.84)(100,6.49)};
		
		\addplot coordinates {(10,0.68)(20,1.3)(30,1.91)(40,2.53)(50,3.15)(60,3.77)(70,4.39)(80,5)(90,5.62)(100,6.24)};
		
		\addplot coordinates {(10,0.55)(20,1.04)(30,1.53)(40,2.02)(50,2.51)(60,3)(70,3.49)(80,3.98)(90,4.47)(100,4.96)};
		
		\addplot coordinates {(10,0.42)(20,0.78)(30,1.15)(40,1.51)(50,1.87)(60,2.23)(70,2.59)(80,2.96)(90,3.32)(100,3.68)};
		
		\addplot coordinates {(10,0.39)(20,0.72)(30,1.05)(40,1.38)(50,1.71)(60,2.04)(70,2.37)(80,2.7)(90,3.03)(100,3.36)};
		\end{axis}
		\end{tikzpicture}
		\begin{tikzpicture}
		\begin{axis}[
		xlabel = Number of users,
		ylabel = Utility of CSP,
		title = {$EF_{u}^{c}(t)= 30\% \text{ of } SF_{u}^{c}(t)$},
		]
		\addplot coordinates {(10,0.64)(20,1.29)(30,1.94)(40,2.59)(50,3.24)(60,3.89)(70,4.54)(80,5.19)(90,5.84)(100,6.49)};
		
		\addplot coordinates {(10,0.69)(20,1.33)(30,1.96)(40,2.6)(50,3.23)(60,3.86)(70,4.5)(80,5.13)(90,5.77)(100,6.4)};
		
		\addplot coordinates {(10,0.63)(20,1.2)(30,1.77)(40,2.34)(50,2.91)(60,3.48)(70,4.05)(80,4.62)(90,5.19)(100,5.76)};
		
		\addplot coordinates {(10,0.57)(20,1.07)(30,1.58)(40,2.08)(50,2.59)(60,3.1)(70,3.6)(80,4.11)(90,4.61)(100,5.12)};
		
		\addplot coordinates {(10,0.55)(20,1.04)(30,1.53)(40,2.02)(50,2.51)(60,3)(70,3.49)(80,3.98)(90,4.47)(100,4.96)};
		\end{axis}
		\end{tikzpicture}
		\begin{tikzpicture}
		\begin{axis}[
		xlabel = Number of users,
		ylabel = Utility of CSP,
		title = {$EF_{u}^{c}(t)= 40\% \text{ of } SF_{u}^{c}(t)$},
		]
		\addplot coordinates {(10,0.64)(20,1.29)(30,1.94)(40,2.59)(50,3.24)(60,3.89)(70,4.54)(80,5.19)(90,5.84)(100,6.49)};
		
		\addplot coordinates {(10,0.71)(20,1.36)(30,2.01)(40,2.66)(50,3.31)(60,3.97)(70,4.62)(80,5.27)(90,5.92)(100,6.57)};
		
		\addplot coordinates {(10,0.71)(20,1.37)(30,2.02)(40,2.68)(50,3.33)(60,3.99)(70,4.64)(80,5.3)(90,5.95)(100,6.61)};
		
		\addplot coordinates {(10,0.72)(20,1.38)(30,2.04)(40,2.7)(50,3.35)(60,4.01)(70,4.67)(80,5.33)(90,5.99)(100,6.65)};
		
		\addplot coordinates {(10,0.72)(20,1.38)(30,2.04)(40,2.7)(50,3.36)(60,4.02)(70,4.68)(80,5.34)(90,6)(100,6.66)};
		\end{axis}
		\end{tikzpicture}
		\begin{tikzpicture}
		\begin{axis}[
		xlabel = Number of users,
		ylabel = Utility of CSP,
		title = {$EF_{u}^{c}(t)= 50\% \text{ of } SF_{u}^{c}(t)$},
		]
		\addplot coordinates {(10,0.64)(20,1.29)(30,1.94)(40,2.59)(50,3.24)(60,3.89)(70,4.54)(80,5.19)(90,5.84)(100,6.49)};
		
		\addplot coordinates {(10,0.73)(20,1.39)(30,2.06)(40,2.73)(50,3.39)(60,4.06)(70,4.73)(80,5.4)(90,6.06)(100,6.73)};
		
		\addplot coordinates {(10,0.71)(20,1.53)(30,2.26)(40,3)(50,3.73)(60,4.47)(70,5.2)(80,5.94)(90,6.67)(100,7.41)};
		
		\addplot coordinates {(10,0.86)(20,1.67)(30,2.47)(40,3.27)(50,4.07)(60,4.88)(70,5.68)(80,6.48)(90,7.29)(100,8.09)};
		
		\addplot coordinates {(10,0.88)(20,1.7)(30,2.52)(40,3.34)(50,4.16)(60,4.98)(70,5.8)(80,6.62)(90,7.44)(100,8.26)};
		\end{axis}
		\end{tikzpicture}
		\\ \ref{named}
	\end{center}
	\caption{The effect of $EF_{u}^{c}(t)$ and $n_{d}^{c}(t)$ on utility of the CSP}
	\label{fig:graphsCSPUtility}
\end{figure}
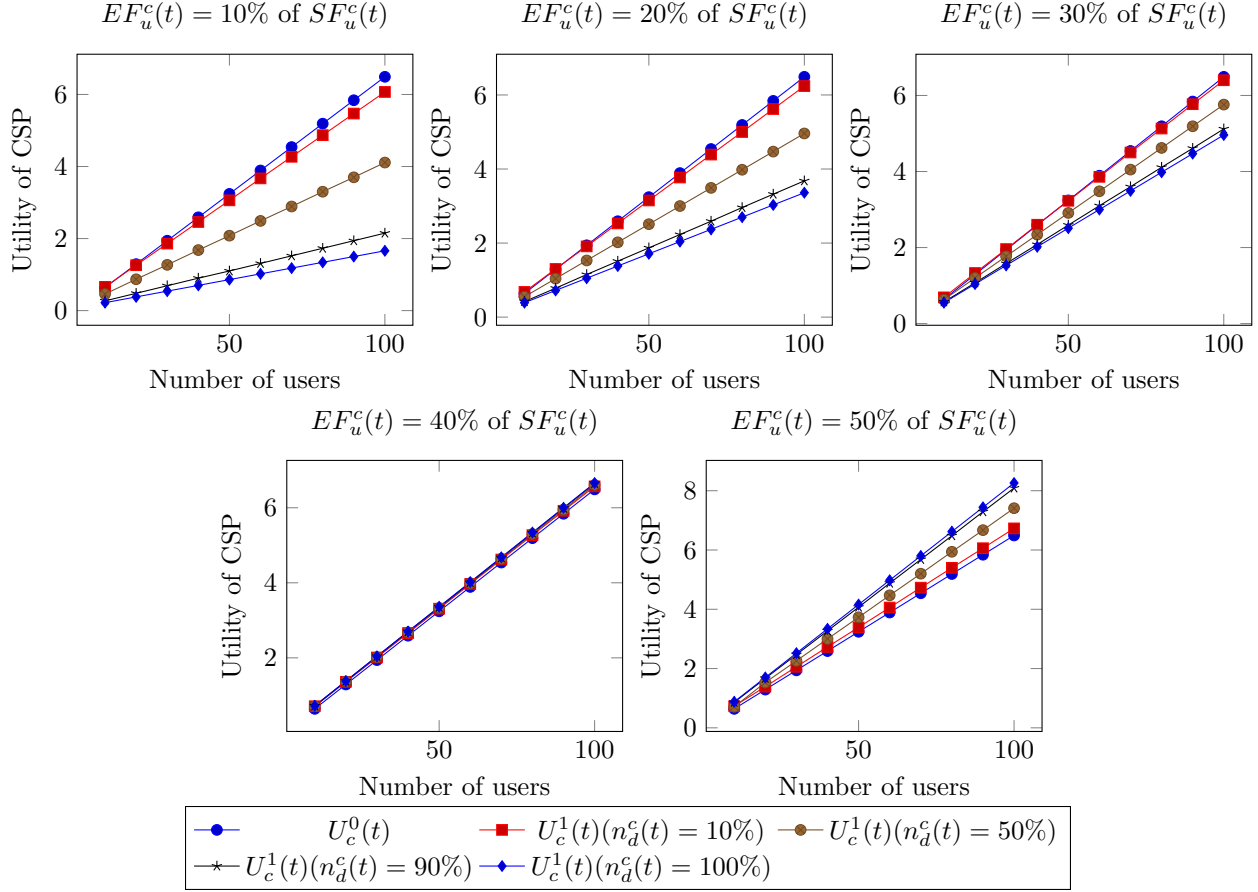
\subsection{Experiment 2: Testing $\mathcal{B}_{DEDU}$ and $\mathcal{B}_{I\text{-}DEDU}$ with public dataset}
In Experiment 1, we have considered only similar requests where all users have the same data file with the same size. Now, in this experiment, we have chosen a dataset which consists of information of Debian packages gathered from Debian popularity contest (https://popcon.debian.org/contrib/by\_inst). We took a snapshot of the number of packages, the number of installations of a single package, and size of each package on 7th May 2020. Each package represents a data file to be stored in the cloud, and the installations serve as the deduplication requests. They were a total of 403 packages (data files) and 270738 installations requests (users). The dataset is very diverse as it consists of data files having different sizes and each data file have a different number of installations. We uniformly distributed the requests among five CSPs and computed their utilities. As we discussed earlier, the inter-CSP deduplication is same as single CSP deduplication from the user point of view, and hence no change in the utility of users. But, the utility of the CSP changes as $U_{c}^{2}(t)=U_{c}^{1}(t)+AF_{in}^{c}(t)-AF_{out}^{c}(t)$. $AF^{c}(t)$ is the fee paid by a CSP $c_{i}$ to CSP $c_{j}$ for accessing the data stored at $c_{j}$. In Figure \ref{fig:interCSPUtilities}, we show the utilities of the above considered dataset by taking $EF_{u}^{c}(t)=40\%$ of $SF_{u}^{c}(t)$ and $n_{d}^{c}(t)=100\% \text{ of } N_{d}^{c}(t)$. Observe that the CSPs gain more utility when opted for inter-CSP deduplication. The gain is due to the increase in the dedup rate. Therefore, with our proposed incentive mechanism dedup with $\mathcal{B}_{I\text{-}DEDU}$ is more profitable than dedup with $\mathcal{B}_{DEDU}$ and dedup with $\mathcal{B}_{DEDU}$ is more profitable than no deduplication.
\pgfplotsset{width=0.8\textwidth,compat=1.3}
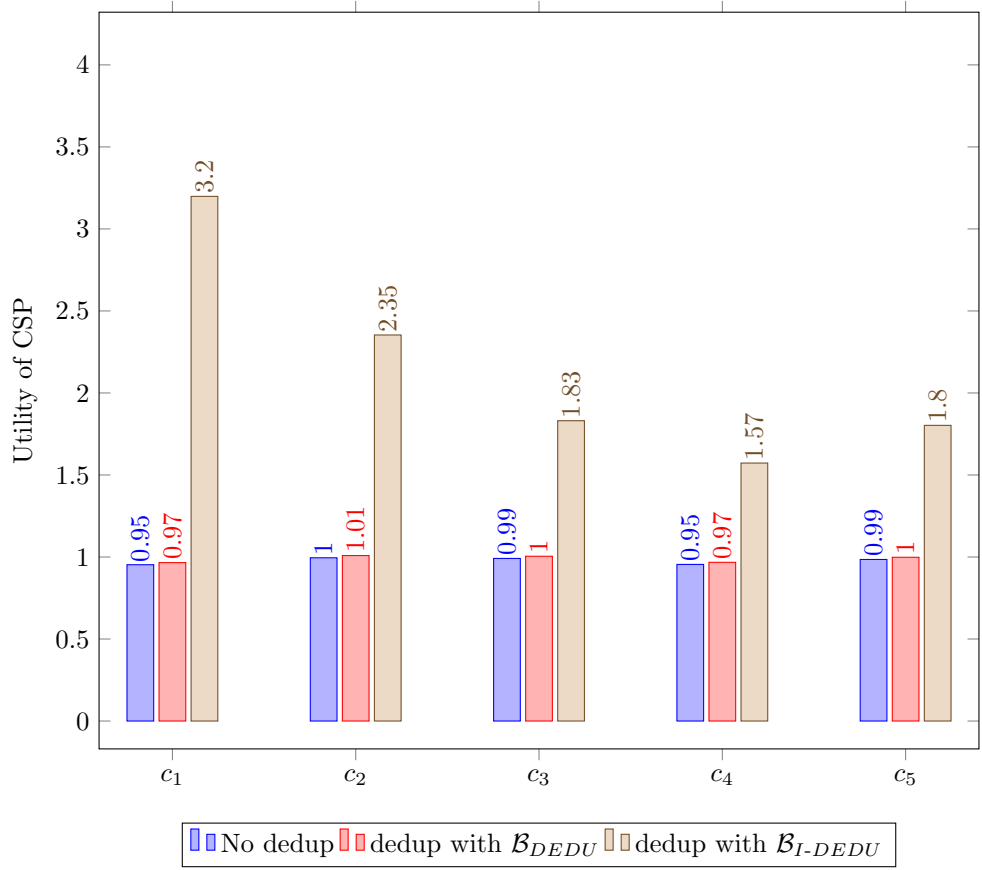
\begin{figure}[!ht]
	\centering
\begin{tikzpicture}
\begin{axis}[
x tick label style={
	/pgf/number format/1000 sep=},
ylabel=Utility of CSP,
symbolic x coords = {$c_{1}$,$c_{2}$,$c_{3}$,$c_{4}$,$c_{5}$},
xtick = data,
enlarge y limits=0.5,
legend style={at={(0.5,-0.1)},
	anchor=north,legend columns=-1},
ybar, 
nodes near coords,
 nodes near coords style={anchor=west,rotate=90,inner xsep=1pt},
]
\addplot 
coordinates {($c_{1}$,0.9529) ($c_{2}$,0.9952) ($c_{3}$,0.9913) ($c_{4}$,0.9548) ($c_{5}$,0.985)};
\addplot 
coordinates {($c_{1}$,0.9657) ($c_{2}$,1.0087) ($c_{3}$,1.0047) ($c_{4}$,0.9677) ($c_{5}$,0.9983)};
\addplot 
coordinates {($c_{1}$,3.1985) ($c_{2}$,2.3535) ($c_{3}$,1.8308) ($c_{4}$,1.5727) ($c_{5}$,1.8025)};
\legend{No dedup, dedup with $\mathcal{B}_{DEDU}$, dedup with $\mathcal{B}_{I\text{-}DEDU}$}
\end{axis}
\end{tikzpicture}
\caption{Utility of CSPs with public dataset}
\label{fig:interCSPUtilities}
\end{figure}
\begin{table}
	\centering
	\begin{tabular}{p{1.2cm}|p{1.7cm}|p{1cm}|p{2cm}|p{2.2cm}|p{0.8cm}}
		\hline
		\textbf{Scheme} & \textbf{Category}& \textbf{Incen-tive} & \textbf{Feature of Incentives} & \textbf{Correctness of deduplication rate} & \textbf{Fair-ness} \\ \hline \hline
		\cite{miao2015payment} & server-controlled& \checkmark & IR & \checkmark (Trusted party) & x\\ \hline
		\cite{li2018deduplication} & Blockchain-controlled& x & x & x & x \\ \hline
		\cite{liang2019game} & server-controlled& \checkmark & IR, IC& \text{x} & x\\ \hline
		\cite{wang2019blockchain} & client-controlled & \checkmark & - & x & partial \\ \hline
		\cite{huang2022blockchain} & server-controlled & \checkmark & - & x  & \checkmark (Arbi-tration)  \\ \hline
		\cite{ming2022blockchain} & Blockchain-controlled & x & x & x & x \\ \hline
		Proposed scheme & Blockchain-controlled& \checkmark & IR, IC& \checkmark (Blockchain)& \checkmark\\
		\hline
	\end{tabular}
	\caption{Comparison with existing works}
	\label{tab:comparisons}
\end{table}
\section{Comparison with existing schemes}
Table \ref{tab:comparisons} shows the comparison of our work with \cite{miao2015payment}, \cite{li2018deduplication}, \cite{liang2019game}, \cite{wang2019blockchain}, \cite{huang2022blockchain} and \cite{ming2022blockchain} in terms of category, incentives, features of incentives, correctness in computation of deduplication rate and fairness. 
\cite{miao2015payment} provides correctness of dedup rate but uses the services of a trusted party known as deduplication rate manager. In contrast, our scheme does not rely on the trusted party and still obtain correctness. The incentive mechanism in \cite{miao2015payment} supports only IR-constraint, whereas our scheme supports both IR-constraint and IC-constraint. The dedup scheme proposed in \cite{li2018deduplication} focuses more on the integrity of the deduplicated data stored in the cloud, whereas our scheme focuses on incentives and fair payment mechanism. The incentive mechanism in \cite{liang2019game} supports both IR-constraint and IC-constraint but, an untrusted CSP computes the dedup rate. Also, the incentives in \cite{liang2019game} are time-variant, whereas our scheme supports uniform payments. 
Although \cite{wang2019blockchain} have considered fair payments, their scheme does not support fair payments in all the cases defined in Theorem \ref{th:fairpayments}. Huang et al. provides incentives, uniform payments as well as fairness through arbitration protocol. However, the arbitration protocol is not in-built into the protocol and requires multiple rounds of offline communication. Ming et al. \cite{ming2022blockchain} emphasize more on storing and retrieving deduplication information from smart contract but not on incentives and fairness.

\section{Conclusion and future work}
Deduplication techniques save storage costs of a cloud storage provider. However, adoption of deduplication techniques by cloud users requires strong incentives and fair payment platform. In this paper, our contributions are two-fold: first, we have designed a new incentive mechanism, and second, we designed a Blockchain-based deduplication scheme. Experiment results show that the proposed incentive mechanism is individually rational and incentive compatible for both CSP and users. The proposed dedup scheme solved the problem of correctness, uniform payments and fair payments in deduplication of cloud data without a trusted intermediary. The designed smart contracts in the proposed scheme are implemented in Ethereum network, and the costs of interacting with the smart contract are presented. The limitation of our scheme is that a malicious CSP may disable the file link after receiving the fee from the contract. In the future, we would like to address this limitation by using deferred payments. Our scheme could serve as a basic model to construct more robust dedup schemes using smart contracts with less number of rounds.

 \bibliographystyle{elsarticle-num}
\bibliography{main}
\end{document}